\documentclass[journal,onecolumn]{IEEEtran}
\usepackage{amsmath,amsfonts,amssymb,amsthm,hyperref,enumerate,multicol,bm}
\usepackage{calligra,indentfirst,epsfig,lettrine}
\usepackage{relsize}
\usepackage{tabls}
\usepackage{array}
\usepackage{longtable}
\calligra
\DeclareFontFamily{U}{mathc}{}
\DeclareFontShape{U}{mathc}{m}{it}%
{<->s*[1.03] mathc10}{}

\DeclareMathAlphabet{\mathscr}{U}{mathc}{m}{it}

\usepackage{comment}
\makeatletter
\newcommand*{\rom}[1]{\expandafter\@slowromancap\romannumeral #1@}
\makeatother
\usepackage{mathtools}
\usepackage{graphicx}
\usepackage{subfigure}
\usepackage{xcolor}
\usepackage{color, soul}
\usepackage{mathrsfs}
\DeclareMathAlphabet{\mathpzc}{OT1}{pzc}{m}{it}

\usepackage{upgreek}

\newtheorem{definition}{Definition}
\newtheorem{corollary}{Corollary}
\newtheorem{theorem}{Theorem}
\newtheorem{example}{Example}
\newtheorem{lemma}{Lemma}
\newtheorem{remark}{Remark}

\allowdisplaybreaks
\usepackage[table]{xcolor}

\begin{document}
\title{Recovery Set Structures and Service Rates of Codes Obtained by the Plotkin-type Construction}

\author {Priyanka~Choudhary
       and~Maheshanand~Bhaintwal\\ 
       
       \thanks{The authors are with the Department of Mathematics, Indian Institute of Technology Roorkee, Roorkee 247667, India (e-mails: \href{mailto:priyanka_c@ma.iitr.ac.in}{priyanka\_c@ma.iitr.ac.in},
   \href{mailto:maheshanand@ma.iitr.ac.in}{maheshanand@ma.iitr.ac.in}).}}

\date{}
\maketitle
\begin{abstract}


 In distributed storage systems, redundancy enables a single data object to be reconstructed using several disjoint groups of servers. The resulting service rate region (SRR) captures all combinations of object request rates that the system can support simultaneously without overloading any individual server. We analyze the SRR of binary codes obtained via the Plotkin-type construction. We demonstrate that each recovery set for an information object of the Plotkin-type code \(\mathcal{C}_p\) induces a corresponding recovery set for the same data object characterized by the underlying code \(\mathcal{C}\). Furthermore, by characterizing the recovery sets structure of \(\mathcal{C}_p\) in terms of the recovery set structure of $\mathcal{C}$, we identify the additional recovery sets introduced by this construction. Using the associated recovery hypergraphs, we establish bounds on the SRRs of \(\mathcal{C}_p\) and iterated code \(\mathcal{C}_{p^m}\) in terms of the SRR of $\mathcal{C}$. 
 We consider the family of first-order binary Reed–Muller codes and show how the recovery structure, and consequently, the service rates of R\((1,m)\) for arbitrary $m$ can be recursively derived from a generator matrix of R\((1,2)\) through our results for successive Plotkin-type constructions.

\end{abstract}
\begin{IEEEkeywords}
Erasure codes, hypergraphs, Plotkin construction, service rate region, recovery sets, Reed-Muller codes.
\end{IEEEkeywords}


\IEEEpeerreviewmaketitle

\section{Introduction}

\IEEEPARstart{T}he increasing demand for concurrent data access in distributed storage systems (DSS) has motivated the study of coding schemes that can provide both reliability and efficient service. In this context, the \emph{service rate region} (SRR) captures the collection of request rate vectors that a storage system can support simultaneously and therefore provides a natural way to quantify its service capability \cite{alfarano2024service}.
The SRR of a code characterizes the set of request rate vectors that can be simultaneously supported when the coordinates of a codeword are stored on individual servers. In this setting, a request for a data object may be served from any of its recovery sets, and the achievable service rates depend not only on the code parameters but also on the combinatorial structure of those sets. Consequently, understanding the recovery-set structure of a code is fundamental to analyzing its SRR. 


 For linear codes, the SRR is closely tied to the combinatorial structure of their recovery sets and can be viewed as a convex polytope in \(\mathbb{R}^k\) \cite{alfarano2024service}. This connection has led to detailed studies of SRRs for several classes of codes, including MDS codes \cite{Aktas, oval, LyH}, binary simplex codes \cite{Combit, Geo}, batch codes \cite{LRC}, Hamming codes \cite{HamPC}, and first-order Reed--Muller codes \cite{Geo}. The role of the structure of the dual of a code and its generator matrix in bounding achievable service rates was investigated in \cite{Alfarano}. Related works have also explored bounds for the maximum achievable service rate for an individual data symbol using general linear coding schemes, and whether these bounds can be achieved when there are combinatorial designs in the supports of dual codewords \cite{Ly, PCM}. Despite these developments, determining the SRR becomes increasingly difficult as the recovery set structure grows in complexity, particularly when the recovery graphs/hypergraphs cannot be explicitly enumerated. Recently, Ly, Soljanin, and Lalitha \cite{LyHS} studied the recovery structure of Reed--Muller codes and obtained bounds on the maximum service rate of information symbols. Then, Yu et al. \cite{Yu} refined the existing SRR analysis by providing an explicit characterization of the intersection patterns of recovery sets for high-order Reed--Muller codes. These results highlight the strong interplay between the algebraic structure of a code and its achievable service rates, and motivate further investigation into how specific code constructions transform recovery sets and, consequently, their SRRs.
A parallel line of work addresses the code designing problem, i.e., constructing codes whose service regions attain given structural requirements or contain a specific demand vector \cite{kurz,kilicc}.

The hypergraph-based approach \cite{LyH, HamPC} indicates that the collection of recovery sets associated with an information object naturally defines a recovery hypergraph. This representation provides a useful framework for studying the service capabilities of a code. In particular, matchings yield collections of pairwise disjoint recovery sets that can serve requests simultaneously, whereas fractional matchings allow recovery capacity to be distributed among intersecting recovery sets. 
The corresponding fractional vertex cover problem, through linear programming duality, provides an equivalent characterization and useful upper bounds on the achievable service rates. 

What is largely absent from the literature is a complementary perspective: rather than computing the region of one code family at a time, one may ask how the region transforms when a standard code construction operator is applied to a code whose recovery structure or, consequently, the SRR is already known. The Plotkin, or $(\mathbf u\mid\mathbf u+\mathbf v)$, construction is the natural initial test case: it gives the elementary step in the recursive description of Reed--Muller codes and doubles the number of servers while adding only one new information object. Thus, any gain it offers reflects increased flexibility rather than improved storage efficiency.

In this work, we investigate the effect of the Plotkin construction on the recovery structure and SRR of a code. The Plotkin construction considered here is a recursive construction that constructs a code of length $2n$ from two constituent codes, namely any arbitrary code $\mathcal C$ of length $n$ and a repetition code of length $n$.
We determine the minimal recovery sets for the information objects of the Plotkin-type code
$\mathcal{C}_p=\{(\mathbf c,\mathbf c):\mathbf c\in\mathcal C\}\cup
\{(\mathbf c,\mathbf 1_n+\mathbf c):\mathbf c\in\mathcal C\}$ in terms of the
recovery structure of that object in $\mathcal {C}$. Our main structural result shows that the map
$R_p\mapsto R_p^{(1)}\triangle R_p^{(2)}$ sends every minimal recovery set corresponding to the generator matrix
$\mathcal{G}_p$ to a recovery set corresponding to the generator matrix $\mathcal{G}$. 
Our work not only enumerates all recovery sets associated with Plotkin-type code but also determines bounds on the service rates of codes obtained through recursive applications of the Plotkin construction, and the resulting service rate region becomes explicitly computable. 
 As an application, we examine the family of first-order binary Reed--Muller codes. The codes $R(1,m)$ can be described recursively using the Plotkin construction, building upon lower-order codes of the family.  Therefore, beginning with the recovery structure associated with a generator matrix of $R(1,2)$,  our results provide a recursive method to identify the recovery sets and analyze the service rates of $R(1,m)$ for arbitrary $m\geq 2$. Our results are consistent with the recovery-set description obtained in \cite{LyHS}. Additionally, the axis intercepts agree with the values predicted by \cite{Ly,LyHS}.
 


The remainder of the paper is organized as follows. Section~\ref{sec2} introduces the necessary preliminaries required to read this paper. In Section~\ref{sec3}, we characterize the recovery sets arising from the Plotkin-type construction and establish the relationship between the recovery structures of $\mathcal{C}$ and $\mathcal{C}_p$. We give a complete characterization of the minimal recovery sets of an information object in 
$\mathcal{C}_p$ in terms of its recovery sets in the parent code $\mathcal{C}$ and the supports of codewords of $\mathcal{C}^\perp$. Section~\ref{sec4} studies the corresponding recovery hypergraphs and derives bounds on the SRR through matching and fractional covering techniques. Starting from an optimal fractional vertex cover and an optimal fractional matching of the recovery hypergraph associated with $\mathcal{C}$, we derive bounds on the fractional vertex cover number and matching numbers of the corresponding hypergraph for $\mathcal{C}_p$, and thereby bounds on the service rates of the associated data symbols. Furthermore, we recursively define the code \(\mathcal{C}_{p^m}\) and derive bounds on the SRR of this code as well. 
 Within this section, we apply these results to first-order binary Reed--Muller codes, providing a recursive characterization of their service rates starting from R\((1,2)\). 
 Finally, Section~\ref{secC} concludes the paper.


\section{Preliminaries}\label{sec2}

In this section, we present the notation and terminology used throughout the paper. We first recall the basic notions of linear codes, and then introduce recovery sets and their associated hypergraph representation. 
Next, we recall the definition of the service rate region of a coded distributed storage system. Finally, we
describe the Plotkin-type construction examined in this work, along with the corresponding generator matrix. 
\subsection{Linear Codes}

For positive integers \(a\) and \(b\) with \(a\leq b\), let $[a,b]=\{a,a+1,\ldots,b\}$,
and, in particular, let
$[1,n]=[n]=\{1,2,\ldots,n\}.$ 
Let \(q\) be a prime power and let \(\mathbb{F}_q\) denote the finite field
with \(q\) elements. The all-zero and all-one vectors of length \(n\) are
denoted by \(\mathbf{0}_n\) and \(\mathbf{1}_n\), respectively. The
\(i\)-th standard basis vector of \(\mathbb{F}_q^k\) is denoted by
\(\mathbf e_i\). For a set $S$, $|S|$ denotes the cardinality of $S$.

A \(q\)-ary linear \([n,k,d]\) code \(\mathcal C\) is a
\(k\)-dimensional subspace of \(\mathbb{F}_q^n\) with minimum Hamming distance \(d\). A $k\times n$ matrix $\mathcal G\in\mathbb{F}_q^{k\times n}$ of rank $k$ whose rows form a basis of \(\mathcal C\) is called a
generator matrix of \(\mathcal C\). For an information (message) vector $\mathbf x=(x_1,\ldots,x_k)\in\mathbb{F}_q^k,$  the corresponding codeword in \(\mathcal C\) is
$\mathbf c=\mathbf x\cdot\mathcal \mathcal G$. 
We denote the \(j\)-th column of \(\mathcal G\) by
\(\mathbf g_j\), \(j\in[n]\). A column \(\mathbf g_j\) is called systematic for  $i\in[k]$ if \(\mathbf g_j=\alpha e_i\), for some nonzero $\alpha\in\mathbb{F}_q$. If for every $i\in[k]$, there exists a systematic column in $\mathcal{G}$, then $\mathcal{G}$ is known as a systematic generator matrix for $\mathcal{C}.$ Without loss of generality, throughout this paper, we consider the standard form for a systematic generator matrix,
$ \mathcal G=(I_k \mid A),$
where $I_k$ is the $k\times k$ identity matrix.

For a vector $\mathbf c=(c_1,\ldots,c_n)\in\mathbb{F}_q^n$, its support is defined by
$\operatorname{supp}(\mathbf c)
=
\{j\in[n]:c_j\neq 0\}$.
The dual code of \(\mathcal C\) is defined by
\[
\mathcal C^\perp
=
\left\{
\mathbf h\in\mathbb F_q^n:
\mathbf h\cdot \mathbf c^\top=0
\text{ for every }\mathbf c\in\mathcal C
\right\},
\]
where $(\cdot)$ denotes the usual Euclidean product. The dual code $\mathcal{C}^\perp$ is an \([n,n-k,d^\perp]\) linear code, where
$d^\perp=d(\mathcal C^\perp)$ denotes the minimum distance of $\mathcal C^\perp$. Let $\mathcal{H}$ be a generator matrix of $\mathcal{C}^\perp$, then $\mathcal{G}\cdot\mathcal{H}^\top=0.$ The matrix $\mathcal{H}$ is called a parity-check matrix of $\mathcal{C}.$

A vector \(\mathbf h=(h_1,\ldots,h_n)\in\mathcal C^\perp\) induces a linear
dependence relation among the entries $c_j$ for every $\mathbf c=(c_1,\ldots,c_n)\in\mathcal C$, i.e.,
\[
\sum_{j=1}^n h_j c_j=0.
\]

\subsection{Recovery Sets and Hypergraph Representation}

Consider a coded data storage system consisting of \(n\) servers storing $k$ information symbols,
where server \(j\) stores the \(j\)-th coordinate of the encoded vector
\(\mathbf x\cdot \mathcal G\). The \(i\)-th information symbol $x_i$ is represented by the
standard basis vector \(\mathbf e_i\).

\begin{definition}
Let \(i\in[k]\). A set \(R\subseteq[n]\) is called a \emph{recovery set}
for the \(i\)-th information symbol if
$\mathbf e_i
\in
\operatorname{span}\{\mathbf g_j:j\in R\}.$

\end{definition}

 A recovery set \(R\) for an information symbol $x_i$
is called \emph{minimal} if no proper subset of \(R\) is a
recovery set for \(x_i\).
We denote the collection of all minimal recovery sets  for
$x_i$  corresponding to the generator matrix $\mathcal{G}$ by $\mathcal R_i^{\min}(\mathcal G)=\mathcal R_i(\mathcal G)$.




For a systematic generator matrix $\mathcal{G}$, a set
\(R\subseteq[n]\setminus\{i\}\) is a non-singleton recovery set for \(x_i\) if
and only if there exists a codeword
\(\mathbf v\in\mathcal C^\perp\) such that
$i\in\operatorname{supp}(\mathbf v)
\subseteq R\cup\{i\}$ \cite{Alfarano}. 

This characterization links the recovery
sets directly to the supports of codewords of the dual code. Let $\mathbf{h}\in\mathcal{C}^\perp$, we say \(D=\operatorname{supp}(\mathbf{h})\) is an
inclusion-minimal support if \(D\) contains
no proper subset whose corresponding columns of $\mathcal{G}$ sum to zero.
 Equivalently, \(D\) is a minimal linear dependency among the columns of \(\mathcal G\). The term dual support means that the given set is the support of a codeword of $\mathcal{C}^\perp.$

The recovery set structure of a linear code can be represented by a
hypergraph.

\begin{definition}
For a generator matrix $\mathcal{G}$ of a linear code $\mathcal{C}$, define the recovery hypergraph $\Gamma_{\mathcal G}
=
\bigl([n],\mathcal{E})$, 
whose vertex set is the set of storage nodes \([n]\), and $\mathcal{E}$ is the set of all hyperedges. Here, $\mathcal{E}$
is the collection of all the minimal recovery sets in $\mathcal R_i(\mathcal G)$ for all $i\in[k].$
\end{definition}
All hyperedges corresponding to $\mathcal R_i(\mathcal G)$ are labeled $\mathbf{e}_i.$
We may also define the partial recovery hypergraph of any $I\subseteq[k]$ as
$\Gamma_{\mathcal G}(I)
=
\bigl([n],\mathcal E(I)\bigr)$,
where $\mathcal E(I)$ consists of  hyperedges only in $\mathcal{R}_i(\mathcal{G})$ for all $i\in I,$ i.e., all hyperedges labeled $\mathbf{e}_j$ are removed from $\Gamma_{\mathcal G}$, where $j\in[k]\setminus I
.$
 Throughout this paper, whenever $I=\{a\}\subseteq [k],$ we adopt the notation $\Gamma_{\mathcal{G}}^{a}$ for the partial recovery hypergraph.

 A \emph{matching} in hypergraph 
$\Gamma_{\mathcal{G}}$ is a set ($\subseteq\mathcal{E}$) of pairwise
vertex-disjoint hyperedges. The \emph{matching number} of $\Gamma_{\mathcal{G}}$,
denoted by $\nu(\Gamma_{\mathcal{G}})$, is the maximum cardinality of a matching. Equivalently,
\[
\nu(\Gamma_{\mathcal{G}})
=
\max
\left\{
\sum_{E\in\mathcal{E}}\alpha_E:
\sum_{E\ni v}\alpha_E\leq 1\ \forall v\in[n]
\right\}, 
\]
where $\alpha_E\in\{0,1\}$ for all $E\in\mathcal{E}$. A \emph{vertex cover} of $\Gamma_{\mathcal{G}}$ is a subset $S\subseteq[n]$ that intersects every hyperedge. The \emph{vertex cover number}, denoted by $\tau(\Gamma_{\mathcal{G}})$, is the minimum
cardinality of a vertex cover, and can equivalently be written as
\[
\tau(\Gamma_{\mathcal{G}})
=
\min
\left\{
\sum_{v\in [n]}\beta_v:
\sum_{v\in E}\beta_v\geq 1\ \forall ~E\in\mathcal{E}
\right\},
\]
where $\beta_v\in\{0,1\}\ \forall ~v\in [n]$. Relaxing the integrality constraints gives the corresponding fractional notions.
A \emph{fractional matching} is an assignment
$\gamma:\mathcal{E}\to[0,1]$ satisfying
\[
\sum_{E\ni v}\gamma_E\leq1
\qquad\text{for every }v\in [n].
\]
The \emph{fractional matching number}, denoted by $\nu_f(\Gamma_{\mathcal{G}})$, is the
maximum total weight of such an assignment, i.e., $\nu_f(\Gamma_{\mathcal{G}})
=
\max
\left\{
\sum_{E\in\mathcal{E}}\gamma_E
\right\}$.
Similarly, a \emph{fractional vertex cover} is an assignment
$\delta:[n]\to[0,1]$ satisfying
\[
\sum_{v\in E}\delta_v\geq1
\qquad\text{for every }E\in\mathcal{E},
\]
and the \emph{fractional vertex cover number}, denoted by $\tau_f(\Gamma_{\mathcal{G}})$,
is given by $\tau_f(\Gamma_{\mathcal{G}})
=
\min
\left\{
\sum_{v\in [n]}\delta_v
\right\}$.
\subsection{Service Rate Region and Fractional Matchings}

Consider a coded DSS with \(n\) servers storing
\(k\) information symbols using a generator matrix \(\mathcal G\). We assume
that each server has unit service capacity. Thus, server \(j\) can process
requests at an average rate of at most $\mu_j=1,
 j\in[n]$.

Let
$\boldsymbol{\lambda}
=
(\lambda_1,\ldots,\lambda_k)
\in\mathbb R_{\geq0}^k$ 
denote a request-rate vector, where \(\lambda_i\) is the rate at which
requests for the \(i\)-th information symbol arrive. For each \(i\in[k]\), let
$\mathcal R_i(\mathcal G)
=
\{R_{i,1},\ldots,R_{i,t_i}\}$ 
denote the collection of minimal recovery sets for $i$-th information symbol. A scheduling policy distributes the requests for each information symbol
among its recovery sets. Let $\lambda_{i,\ell}\geq0$
denote the fraction of $\lambda_i$ assigned to 
\(R_{i,\ell}\).

\begin{definition}
The \emph{service rate region} of \(\mathcal G\), denoted by
\(\Lambda(\mathcal G)\), is the set of all
\(\boldsymbol{\lambda}\in\mathbb R_{\geq0}^k\) for which there exists a
nonnegative allocation
$\{\lambda_{i,\ell}:
i\in[k],\ \ell\in[t_i]\}$
satisfying
\begin{equation}
\sum_{\ell=1}^{t_i}\lambda_{i,\ell}
=
\lambda_i,
\qquad
i\in[k],
\end{equation}
and
\begin{equation}
\sum_{i=1}^k
\sum_{\ell:\,j\in R_{i,\ell}}
\lambda_{i,\ell}
\leq 1,
\qquad
j\in[n].
\end{equation}
\end{definition}

Constraint (1) ensures that the complete request rate for each information
symbol is served. Constraint (2) ensures that the aggregate request rate
assigned to any server does not exceed its unit service capacity.
An allocation satisfying (1) and (2) is called a \emph{feasible allocation}
for \(\boldsymbol{\lambda}\). 
A request rate vector belonging to
\(\Lambda(\mathcal G)\) is called \emph{achievable}. 
Let $I\subseteq[k]$, then define \[\left(\sum_{i\in I }\lambda_i\right)^*=\max\left\{\sum_{i\in I }\lambda_i: \boldsymbol{\lambda}=(\lambda_1,\ldots,\lambda_k)\in\Lambda(\mathcal{G})\right\}.\]
 For $I=\{i\}$, $\lambda_i^*$ is called the maximal achievable service rate for the $i$-th information symbol.
Since every recovery set contains at least one minimal recovery set, it is
sufficient to consider inclusion-minimal recovery sets when determining the SRR. 


The SRR of a coded DSS admits a natural combinatorial interpretation through the recovery hypergraph associated with a generator matrix of the code. Let \(\mathcal{C}\) be an \([n,k,d]\) linear code with a generator matrix $\mathcal{G}$, and let \(\Gamma_\mathcal{G}=([n],\mathcal E)\) denote its recovery hypergraph.

Consider a request vector $\boldsymbol{\lambda}=(\lambda_1,\lambda_2,\ldots,\lambda_k)$.  If \(w_E\) denotes the amount of demand assigned to the recovery set represented by the hyperedge \(E\in \mathcal E\), then the node-capacity constraints are given by
\[
\sum_{E\ni v}w_E\leq 1,\qquad \forall\,v\in[n]
.\]

As shown in \cite{LyH, Combit, Aktas}, these are precisely the feasibility constraints of a fractional matching on $\Gamma_G$. Consequently, every fractional matching $\mathbf{w}\in \mathbb{R}^{|\mathcal E|}_{\geq 0}$ induces a feasible service allocation $\boldsymbol{\lambda}(\mathbf{w})=(\lambda_1(\mathbf{w}), \lambda_2(\mathbf{w}), \ldots, \lambda_k(\mathbf{w}))$, where $\lambda_i(\mathbf{w})$ is the total sum of $w_E$ over all edges labeled by $\mathbf e_i$. Conversely, every feasible service allocation corresponds to a fractional matching \cite[Proposition 1]{Aktas}. 
In other words, the SRR is the image of the fractional matching polytope of the recovery hypergraph under a linear mapping that associates each fractional matching with its corresponding request rate vector.

Under the fractional matching problem formulation of SRR \cite{Combit}, the maximum achievable service rate is characterized by the fractional matching number of the recovery hypergraph. This, together with the classical matching--vertex cover inequality gives
\begin{equation}\label{inequalities}
    \nu(\Gamma_{\mathcal{G}})\leq \left(\sum\limits_{i=1}^k\lambda_i\right)^* =\nu_f(\Gamma_\mathcal{G})=\tau_f(\Gamma_\mathcal{G})\leq\tau(\Gamma_\mathcal{G}).
\end{equation}

 It can easily be verified that the relation given in \eqref{inequalities} is also true for any partial recovery hypergraph $\Gamma_{\mathcal{G}}(I)$, where $I\subset[k].$

\subsection{Codes and Their Plotkin-Type Construction}\label{plotkinc}
 Let $\mathcal C$ be an \([n,k,d]\) binary linear code with a generator matrix and a parity-check matrix
$\mathcal G$ and $\mathcal{H}$, respectively.

We consider the Plotkin-type construction \cite{Huffman}
\[
\mathcal C_p
=
\left\{
(\mathbf c,\mathbf c):\mathbf c\in\mathcal C
\right\}
\cup
\left\{
(\mathbf c,\mathbf 1_n+\mathbf c):\mathbf c\in\mathcal C
\right\}
\label{C_p}.
\]
$\mathcal C_p$ is a $[2n, k+1, \min\{n,2d\}]$ linear code over \(\mathbb F_2\). 

A generator matrix and a parity-check matrix for \(\mathcal C_p\) is given by $$\mathcal{G}_p=\begin{pmatrix}
    \mathcal{G} &\mathcal{G}\\\mathbf{0}_n &\mathbf{1}_n
\end{pmatrix},\qquad \text{ and }\qquad \mathcal{H}_p=\begin{pmatrix}
    \mathcal{H} &\mathbf{O}\\\mathcal{H}^\prime&\mathcal{H}^\prime
\end{pmatrix}, $$ respectively, where $\mathbf{O}$ stands for the $(n-k)\times n$ zero matrix, and $\mathcal{H}^\prime$ is a $(n-1)\times n$ parity-check matrix of the repetition code. A standard systematic choice for $\mathcal{H}^\prime$ is $\begin{pmatrix}
    I_{n-1} & \mathbf{1}_{n-1}^\top
\end{pmatrix}.$ Hence, $ d(\mathcal{C}_p^\perp)=\min\{d(\mathcal{C}^\perp),4\}.$

Throughout the paper, unless otherwise stated, $\mathcal C$ denotes a binary linear
code, and $\mathcal{C}_p$ denotes the associated Plotkin-type code defined in \ref{C_p}. Their respective generator matrices are denoted by $\mathcal{G}$ and $\mathcal{G}_p$ as defined in this section. Further, $\mathcal G$ has no zero column and no repeated systematic columns for any data object.
 
\section{Recovery set structure augmentation for Plotkin-type code}\label{sec3}

Let $\mathbf{x}=(x_1, x_2, \ldots,x_k)$ denote an information vector containing $k$ information symbols for a codeword $\mathbf{c}=(c_1, c_2, \ldots, c_n)\in \mathcal{C}$, and let $x_{k+1}$ be the newly added $(k+1)$-th information symbol for the associated codeword $\mathbf{c}_p=(c_1, c_2, \ldots, c_n, c_{n+1}, c_{n+2}, \ldots, c_{2n})\in\mathcal{C}_p$, where $c_{n+j}=c_j+x_{k+1}$ for all $j\in[n]$.

For a subset \(R\subseteq[n]\), we define
$R+n
:=
\{j+n:j\in R\}.$ For subsets \(A,B\subseteq[2n]\), their symmetric difference is defined by $A\triangle B
:=
(A\setminus B)\cup(B\setminus A)$.

It follows immediately from the definition that for any  $i\in[k]$ of $\mathcal{C}$, we have $\mathcal{R}_i(\mathcal{G})\subseteq\mathcal{R}_i(\mathcal{G}_p).$ Therefore, our main focus is on the recovery sets present in $\mathcal{R}_i(\mathcal{G}_p)\setminus \mathcal{R}_i(\mathcal{G})$. Now, we characterize the relationship between a recovery set
$R_p\in\mathcal{R}_i(\mathcal{G}_p)$ and the corresponding recovery set
induced in $\mathcal{R}_i(\mathcal{G})$.

Let $i\in[k]$ and let $R_p\in\mathcal{R}_i(\mathcal{G}_p)$, i.e., for $\mathbf{c}_p=(c_1,\ldots,c_n,$ $c_{n+1},\ldots, c_{2n})\in\mathcal{C}_p$, we have \[x_i=\sum_{j\in R_p}c_j.\]
Define
\[
R_p^{(1)}:=R_p\cap[n],
\quad
R_p^{(2)}:=\{j-n:j\in R_p\cap[n+1,2n]\}.
\]Note that, if $R_p^{(2)}=\emptyset$, then $R_p^{(1)}=R_p\in\mathcal{R}_i(\mathcal{G}).$ Therefore, we examine only those recovery sets in $\mathcal{R}_i(\mathcal{G}_p)$ with $R_p^{(2)}\neq \emptyset.$
Now \begin{align*}
    x_i&=\sum_{j\in R_p^{(1)}}c_j+\sum_{j^\prime\in R_p^{(2)}}c_{n+j^\prime}=\sum_{j\in R_p^{(1)}}c_j+\sum_{j^\prime\in R_p^{(2)}}(c_{j^\prime}+x_{k+1})\\&=\sum_{j\in R_p^{(1)}\triangle R_p^{(2)}}c_j+|R_p^{(2)}|x_{k+1},\end{align*}
 where $\triangle$ denotes the symmetric difference.
Since we want to induce recovery sets associated with $\mathcal{G}$, there should be no role of $x_{k+1}$. Therefore, $|R_p^{(2)}|$ must be even.

\begin{remark}\label{cap12}
    The minimality of $R_p$ forces $|R_p^{(1)}\cap R_p^{(2)}|\leq 1.$ Because if there are two distinct elements $m$ and $m^\prime$ in $R_p^{(1)}\cap R_p^{(2)}$, then $\bigl(R_p^{(1)}\setminus\{m,m^\prime\}\bigr)\cup \bigl((R_p^{(2)}\setminus\{m,m^\prime\})+n\bigr)$ also recovers $x_i$, contradicting minimality of $R_p.$
\end{remark}

Moreover, if there exists $\mathbf{h}\in\mathcal{C}^\perp$ such that $\operatorname{supp}(\mathbf{h})=D\subsetneq R_p^{(1)}\triangle R_p^{(2)}$, then we have
\[x_i=\sum_{j\in (R_p^{(1)}\triangle R_p^{(2)})\setminus D}c_j.\]
Hence, $(R_p^{(1)}\triangle R_p^{(2)})\setminus D$ is a recovery set for $x_i$. 

\begin{remark}\label{2D}
    It is important to note that if such a $D$ exists, then $|R_p^{(2)}\cap D|$ is odd. Because, if $|R_p^{(2)}\cap D|$ is even then $\big(R_p^{(1)}\setminus D\big)\cup\left(\big(R_p^{(2)}\setminus D\big)+n\right)$ recovers $x_i$, and it is contained in $R_p$. This contradicts the minimality of $R_p.$ \end{remark}

Now, we demonstrate that $(R_p^{(1)}\triangle R_p^{(2)})\setminus D$ is a minimal recovery set for $x_i$. Suppose, for contradiction, that there exists a nonempty proper subset $T \subsetneq \left(R_p^{(1)} \triangle R_p^{(2)}\right)\setminus D$
such that $T\in\mathcal{R}_i(\mathcal{G})$. Let
$T=A\cup B,$ 
where $A\subseteq R_p^{(1)}\setminus (R_p^{(2)}\cup D)$ and $B\subseteq R_p^{(2)}\setminus (R_p^{(1)}\cup D)$. 

First, suppose that there does not exist a nonempty subset $D\subsetneq R_p^{(1)}\triangle R_p^{(2)}$ such that $D=\operatorname{supp}(\mathbf{h})$ for some $\mathbf{h}\in\mathcal{C}^\perp$. Then
    \[\big( R_p^{(1)}\setminus A\big) \cup A\cup (B+n)\cup \big((R_p^{(2)}\setminus B)+n\big)=R_p.\] We have \begin{align}x_i&=\sum_{j\in R_p}c_j=\sum_{j\in R_p^{(1)}\setminus A}c_j+\sum_{j\in A}c_j+\sum_{j\in B}(c_j+x_{k+1})+\sum_{j\in R_p^{(2)}\setminus B}(c_j+x_{k+1}), \\&=\sum_{j\in R_p^{(1)}\setminus A}c_j+\sum_{j\in R_p^{(2)}\setminus B}c_j+\sum_{j\in T}c_j+|R_p^{(2)}|x_{k+1},\\&=\sum_{j\in R_p^{(1)}\setminus A}c_j+\sum_{j\in R_p^{(2)}\setminus B}c_j+x_i.\end{align}

Moreover, since $T $ is a proper subset of $R_p^{(1)} \triangle R_p^{(2)}$, $\big(R_p^{(1)}\setminus A\big)\triangle\big( R_p^{(2)}\setminus B\big)$ is nonempty. This implies that $\big(R_p^{(1)}\setminus A\big)\triangle\big( R_p^{(2)}\setminus B\big)$ is the support of some codeword of $\mathcal{C}^\perp$. This contradicts precisely the assumption that no such \(D\) exists. Therefore, in this case no proper subset $T$ exists that recovers $x_i$, implying that $\big(R_p^{(1)}\setminus A\big)\triangle\big( R_p^{(2)}\setminus B\big)\setminus D$ is minimal recovery set.

Next, assume that there exists a nonempty subset $D=\operatorname{supp}(\mathbf{h})\subsetneq R_p^{(1)}\triangle R_p^{(2)}$ for some $\mathbf{h}\in\mathcal{C}^\perp$. We consider two cases according to the parity of $|B|$.

\begin{enumerate}[(a).]
    \item $|B|$ is even. Then 
it can easily be verified that $A\cup (B+n)$ is also a recovery set for $x_i$. 
It follows that
$A\cup (B+n)\subsetneq R_p.$
Thus, $R_p$ contains a proper subset that is itself a recovery set for $x_i$, contradicting the minimality of $R_p$.

 \item $|B|$ is odd. Then  $A\cup(B+n)\cup \big(R_p^{(1)}\cap D \big)\cup \left(\big(R_p^{(2)}\cap D\big)+n\right)\subsetneq R_p$, and recovers $x_i$. Again, contradicting the minimality of $R_p$. 
\end{enumerate} 


Hence, no such set $T$ exists. Consequently, $(R_p^{(1)}\triangle R_p^{(2)})\setminus D\in \mathcal{R}_i(\mathcal{G}).$ In other words,  $R_p^{(1)}\triangle R_p^{(2)}\supseteq R$ for some $R\in\mathcal{R}_i(\mathcal{G})$.
\begin{remark}\label{rem:D-unique}
Let $i\in[k]$ and let $R_p\in\mathcal{R}_i(\mathcal{G}_p)$. If there exists a nonempty subset
$D\subseteq R_p^{(1)}\triangle R_p^{(2)}$ such that $D=\operatorname{supp}(\mathbf{h})$ for some
$\mathbf{h}\in\mathcal{C}^{\perp}$, then such a $D$ is unique.
\end{remark}
The above observations are summarized in the following theorem.
 
\begin{theorem}\label{reverse}Let $\mathcal{C}$ be an $[n,k,d]$ binary linear code with a generator matrix $\mathcal{G}$. Let $i\in[k]$. Then the following results hold:
    \begin{enumerate}[(i)]
    \item For any $R_p\in\mathcal{R}_i(\mathcal{G}_p)$, $|R_p\cap[n+1,2n]|$ must be even.
    Consequently, no recovery set $R_p\subseteq[n+1,2n]$ with $|R_p|$ odd belongs to $\mathcal{R}_i(\mathcal{G}_p)$.

    \item  Every recovery set $R_p\in\mathcal{R}_i(\mathcal{G}_p)$ induces a
recovery set in $\mathcal{R}_i(\mathcal{G})$ through the mapping
\[
R_p\longmapsto
\left(R_p\cap[n]\right)
\triangle
\{j-n:j\in R_p\cap[n+1,2n]\}.
\]
\end{enumerate}
\end{theorem}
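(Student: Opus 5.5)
We work directly from the linear-algebraic definition of a recovery set with respect to $\mathcal{G}_p$. We do not reuse the case analysis preceding the theorem, since the statement only needs the weaker conclusion "induces a recovery set", not minimality. Let $R_p\in\mathcal{R}_i(\mathcal{G}_p)$ and split it as $R_p^{(1)}=R_p\cap[n]$ and $R_p^{(2)}=\{j-n:j\in R_p\cap[n+1,2n]\}$. Since $\mathcal{G}_p$ is binary, $\mathbf e_i\in\operatorname{span}\{\mathbf g^{(p)}_j:j\in R_p\}$ means there is a subset $S\subseteq R_p$ with $\sum_{j\in S}\mathbf g^{(p)}_j=\mathbf e_i\in\mathbb F_2^{k+1}$. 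Minimality forces $S=R_p$, because $S$ itself would be a recovery set. Hence
\[
\sum_{j\in R_p^{(1)}}\begin{pmatrix}\mathbf g_j\\0\end{pmatrix}+\sum_{j\in R_p^{(2)}}\begin{pmatrix}\mathbf g_j\\1\end{pmatrix}=\begin{pmatrix}\mathbf e_i\\0\end{pmatrix},
\]
which is the column form of the identity $x_i=\sum_{j\in R_p^{(1)}\triangle R_p^{(2)}}c_j+|R_p^{(2)}|x_{k+1}$ derived above.

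For part (i) we read off the last coordinate. It gives $|R_p^{(2)}|\equiv 0\pmod 2$, and $|R_p^{(2)}|=|R_p\cap[n+1,2n]|$, so this proves (i). The consequence follows immediately: if $R_p\subseteq[n+1,2n]$, then $|R_p|=|R_p\cap[n+1,2n]|$ must be even, so no odd-sized such set can lie in $\mathcal{R}_i(\mathcal{G}_p)$.

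For part (ii) we read off the first $k$ coordinates. They give $\sum_{j\in R_p^{(1)}}\mathbf g_j+\sum_{j\in R_p^{(2)}}\mathbf g_j=\mathbf e_i$ over $\mathbb F_2$. The indices in $R_p^{(1)}\cap R_p^{(2)}$ appear twice and cancel, so the left side equals $\sum_{j\in R_p^{(1)}\triangle R_p^{(2)}}\mathbf g_j$. Therefore $\mathbf e_i\in\operatorname{span}\{\mathbf g_j:j\in R_p^{(1)}\triangle R_p^{(2)}\}$, which says that $R_p^{(1)}\triangle R_p^{(2)}$ is a recovery set for $x_i$ with respect to $\mathcal{G}$. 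It then contains some minimal $R\in\mathcal{R}_i(\mathcal{G})$.

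To make the image an element of $\mathcal{R}_i(\mathcal{G})$ itself, we invoke the discussion preceding the theorem. If no dual support $D\subsetneq R_p^{(1)}\triangle R_p^{(2)}$ exists, the set is already minimal. Otherwise $D$ is unique by Remark~\ref{rem:D-unique}, and $(R_p^{(1)}\triangle R_p^{(2)})\setminus D$ is minimal. The main obstacle is this minimality refinement, not the span computation. I would accept the earlier argument, via Remark~\ref{2D} and the parity cases on $|B|$, as establishing it, and state (ii) in the "induces" sense: the image is a recovery set containing a member of $\mathcal{R}_i(\mathcal{G})$.

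One degenerate case needs a check: $R_p^{(1)}\triangle R_p^{(2)}$ must be nonempty. This holds because $\mathbf e_i\neq\mathbf 0$, so the sum of columns over this set cannot be empty.
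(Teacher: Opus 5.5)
Your proof is correct and follows essentially the same route as the paper: you split $R_p$ into $R_p^{(1)}$ and $R_p^{(2)}$, read the parity of $|R_p^{(2)}|$ off the coefficient of $x_{k+1}$ (your last coordinate), and cancel the indices in $R_p^{(1)}\cap R_p^{(2)}$ to obtain the symmetric difference. Your column formulation also makes explicit two points the paper leaves implicit: minimality forces the full column sum over $R_p$ to equal $\mathbf{e}_i$, and the $x_{k+1}$-coefficient must vanish because $i\le k$. Reading (ii) as ``the image contains a member of $\mathcal{R}_i(\mathcal{G})$'' matches how the paper phrases and later uses it, and it is the right reading, since the image itself need not be minimal (e.g., the sets $(R\cup D)+n$ of Corollary~\ref{colR13} map to $R\cup D\supsetneq R$).
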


Theorem \ref{reverse} establishes the correspondence from $\mathcal{R}_i(\mathcal{G}_p)$ to $\mathcal{R}_i(\mathcal{G})$. 
 Now, we consider the reverse direction. 
In the following theorem, for a given $R\in\mathcal{R}_i(\mathcal{G})$, we characterize the recovery sets in $\mathcal{R}_i(\mathcal{G}_p)$ induced by $R$, including the additional recovery sets arising from linear dependencies among the corresponding columns.


\begin{theorem}\label{thmR1}
    Let $\mathcal{C}$ be an $[n,k,d]$ binary linear code with a generator matrix $\mathcal{G}$. Then the following results hold:
    \begin{enumerate}[(i)]
        \item\label{thmr1} Let $R\in\mathcal{R}_i(\mathcal{G})$, where $i\in[k]$. For a nonempty subset $T\subseteq R$ with $| T|$ even, we have a unique recovery set \[R\setminus T\cup (T+n)=\widetilde{R}_T\in\mathcal{R}_i(\mathcal{G}_p),\] such that $|R|=|\widetilde{R}_T|.$
        \item\label{thmr2} For each $j\in[n]$, we have $\{j,j+n\}\in\mathcal{R}_{k+1}(\mathcal{G}_p)$. Consequently, there are $n$ disjoint recovery sets for $x_{k+1}$ of cardinality $2.$
        \item\label{thmr3} Let $i\in[k]$, let $R\in\mathcal{R}_i(\mathcal{G})$, and  there exists a nonempty subset $D=\operatorname{supp}(\mathbf{h})$ for some $\mathbf{h}\in\mathcal{C}^\perp$
with $R\cap D=\emptyset$. Assume that $D$ is the only nonempty dual support contained in
$R\cup D$. For $F_R\subseteq R$ and $F_{D}\subseteq D$ with $|F_R|$ and $|F_D|$ odd, define
\[
\mathcal{F}
=
\bigl(R\setminus F_R\bigr)\cup\bigl(D\setminus F_{D}\bigr)\cup\bigl((F_R\cup F_{D})+n\bigr).
\]
 Then $\mathcal{F}\in\mathcal{R}_i(\mathcal{G}_p)$. 


    \end{enumerate}
\end{theorem}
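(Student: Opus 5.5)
Throughout I use the column description of $\mathcal{G}_p$: for $j\in[n]$ the $j$-th column is $(\mathbf g_j,0)^\top$ and the $(j+n)$-th is $(\mathbf g_j,1)^\top$. Thus a set $S\subseteq[2n]$ recovers $x_i$, $i\le k$, exactly when the restricted columns sum to $(\mathbf e_i,0)$ (binary case). So the first $k$ coordinates of the sum must give $\mathbf e_i$ after folding via $S^{(1)}\triangle S^{(2)}$, and the last coordinate, which equals $|S\cap[n+1,2n]| \bmod 2$, must vanish. Each part has an existence check and a minimality check; minimality is where the work is.

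For (i), existence is immediate. The folded set of $\widetilde R_T$ is $(R\setminus T)\cup T=R$, so the first $k$ coordinates sum to $\mathbf e_i$, and the last coordinate is $|T|\equiv0$. Uniqueness and $|\widetilde R_T|=|R|$ are clear from the definition.

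For minimality of $\widetilde R_T$, suppose $S\subsetneq\widetilde R_T$ recovers $x_i$. Because $R^{(1)}$ and $T$ are disjoint, folding $S$ gives a set $S'$ with $S'\subseteq R$ and $|S'|=|S|<|R|$. This $S'$ recovers $x_i$ in $\mathcal{G}$ by the same computation as in Theorem~\ref{reverse}, using that $|S\cap[n+1,2n]|$ must be even. That contradicts the minimality of $R$.

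For (ii), columns $j$ and $j+n$ sum to $(\mathbf 0,1)=\mathbf e_{k+1}$. Neither singleton recovers $x_{k+1}$: $\mathbf g_j\neq0$ since $\mathcal G$ has no zero column, and the last coordinate of column $j$ is $0$. Hence $\{j,j+n\}$ is minimal, and these $n$ sets are pairwise disjoint.

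For (iii), existence goes as follows. Folding $\mathcal F$ gives $R\cup D$, since $R\cap D=\emptyset$. The columns of $D$ sum to zero because $D$ is a dual support, so the first $k$ coordinates sum to $\mathbf e_i$. The number of columns in $[n+1,2n]$ is $|F_R|+|F_D|$, which is even, so the last coordinate is $0$.

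For minimality, let $S\subsetneq\mathcal F$ recover $x_i$. Its fold $S'\subseteq R\cup D$ recovers $x_i$ in $\mathcal G$, so the columns of $S'\triangle R$ sum to zero. This makes $S'\triangle R$ empty or a dual support. It lies inside $R\cup D$, so by the hypothesis that $D$ is the only nonempty dual support in $R\cup D$, either $S'=R$ or $S'=R\cup D$.

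If $S'=R\cup D$, then $S$ has the same fold as $\mathcal F$ with each position used once, so $S=\mathcal F$, a contradiction.

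If $S'=R$, then $S$ misses all of $D$'s positions in $\mathcal F$, i.e.\ $S\subseteq(R\setminus F_R)\cup(F_R+n)$. The number of elements of $S$ in $[n+1,2n]$ is then at most $|F_R|$ and must be even. Since $|F_R|$ is odd, $S$ must omit some element of $R$'s image, so the fold of $S$ is a proper subset of $R$. That contradicts $S'=R$. In fact the fold is a proper subset of $R$ recovering $x_i$, which also contradicts minimality of $R$.

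The main obstacle is this minimality step in (iii). In particular, one must justify that $S'\triangle R$ is a dual support, which uses $\mathbf e_i$ appearing on both sides, and then apply the uniqueness hypothesis on $D$ carefully.
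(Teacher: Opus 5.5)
Your proof is correct and follows essentially the paper's route for (iii). You fold a hypothetical recovering $S\subsetneq\mathcal F$ into $[n]$ and note that $S'\triangle R$ is empty or a dual support inside $R\cup D$; the paper uses the complementary set $(R\cup D)\setminus S'$ instead. Either way $S'\in\{R,\,R\cup D\}$, and the case $S'=R$ is ruled out by the odd parity of $|F_R|$. Your folding arguments for minimality in (i) and (ii), which the paper only asserts, are also sound, given the harmless convention (shared by the paper) that a recovering subset may be shrunk to one whose columns sum exactly to $(\mathbf e_i,0)$.
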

\begin{proof} 
  Given $R\in\mathcal{R}_i(\mathcal{G})$ and a subset $T\subseteq R$, we have \[x_i=\sum_{j\in R}c_j=\sum_{j\in R\setminus T}c_j+\sum_{j'\in T }(x_{k+1}+c_{n+j'})=\sum_{j\in R\setminus T}c_j+\sum_{j'\in T}c_{n+j'}+\lvert T\rvert x_{k+1}.\] 

   Since $|T|$ is even, the sum can be written as \[x_i=\sum_{j\in R\setminus T}c_j+\sum_{j'\in T}c_{n+j'}.\] It follows that $\widetilde{R}_T$ recovers $x_i$. It is easy to verify that $\widetilde{R}_T$ is a minimal recovery set. Hence $\widetilde{R}_T\in\mathcal{R}_i(\mathcal{G}_p).$
   
 Further, a set of coordinate positions $\{j, j+n\}$ corresponds to the codeword symbols $(c_j,c_{j+n}=c_j+x_{k+1})$. Hence, we have $\{j, j+n\}\in\mathcal{R}_{k+1}(\mathcal{G}_p)$, for all $j\in[n]$. This implies that there exist $n$ disjoint recovery sets of the form $\{j,j+n\}$ for $x_{k+1}.$

  For the last part, note that \[\sum_{j\in\mathcal{F}}c_j=\sum_{j\in R\setminus F_R}c_j+\sum_{j\in D\setminus F_D}c_j+\sum_{j\in F_R\cup F_D}c_{j+n}=\sum_{j\in R}c_j+\sum_{j\in D}c_j+(|F_R|+|F_D|)x_{k+1}=x_i.\]
  This indicates that $\mathcal{F}$ recovers $x_i$. To prove minimality, suppose that $M=A\cup (B+n)\subsetneq \mathcal{F}$, such that $M$ recovers $x_i$, $A\subseteq \bigl(R\setminus F_R\bigr)\cup\bigl(D\setminus F_{D}\bigr)$, and $B\subseteq ( F_R\cup F_D)$ with $(A,B)\neq \bigl((R\setminus F_R)\cup(D\setminus F_{D}), F_R\cup F_D\bigr)$. This implies that $(R\cup D)\setminus(A\cup B)=N$ (say), is the support of a codeword in $\mathcal C^\perp$ contained in $R\cup D.$ According to the assumption, $N=D$ or $\emptyset.$  Now, if $N=D$, then $|B|=|F_R\cup F_D|-|F_D|$ is odd, which is a contradiction. Hence, $N=\emptyset$. It follows that $A= \bigl(R\setminus F_R\bigr)\cup\bigl(D\setminus F_{D}\bigr)$, and $B=( F_R\cup F_D)$. Therefore, $\mathcal{F}\in\mathcal{R}_i(\mathcal{G}_p).$ 
  The proof concludes.
\end{proof}
\begin{corollary}\label{colR11}
     Consider $T=R$ with $|R|$ even. Then $\widetilde{R}_R\in\mathcal{R}_i(\mathcal{G}_p)$, such that $|\widetilde{R}_R|=|R|$, and $R\cap \widetilde{R}_R=\emptyset$.
\end{corollary}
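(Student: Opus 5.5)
The corollary is the special case $T=R$ of Theorem~\ref{thmR1}(\ref{thmr1}), so the plan is to check that $T=R$ satisfies that part's hypotheses and then read off the two extra claims.

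First, the hypotheses. Since $R\in\mathcal{R}_i(\mathcal{G})$ is a minimal recovery set, it is nonempty; indeed $\mathbf e_i\neq \mathbf 0$ is not in the span of the empty set. By assumption $|R|$ is even. Hence $T=R$ is a nonempty subset of $R$ with $|T|$ even, and Theorem~\ref{thmR1}(\ref{thmr1}) gives
\[
\widetilde{R}_R=(R\setminus R)\cup(R+n)=R+n\in\mathcal{R}_i(\mathcal{G}_p),\qquad |\widetilde{R}_R|=|R|.
\]
The cardinality equality is also immediate, because $j\mapsto j+n$ is injective.

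For a self-contained check of recovery, apply the same computation as in the proof of Theorem~\ref{thmR1}:
\[
\sum_{j\in R}c_{j+n}=\sum_{j\in R}c_j+|R|\,x_{k+1}=x_i,
\]
using that $|R|$ is even over $\mathbb{F}_2$.

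For minimality, suppose a proper subset $S+n$ with $S\subsetneq R$ recovers $x_i$. Then Theorem~\ref{reverse}(ii) maps it to $S$, which would be a recovery set for $x_i$ with respect to $\mathcal{G}$. The same conclusion follows from Theorem~\ref{reverse}(i): $|S|$ must be even, so $\sum_{j\in S}c_j=x_i$. Either way this contradicts the minimality of $R$. This minimality step is the only point needing any care, and it is already covered by Theorem~\ref{thmR1}(\ref{thmr1}).

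Finally, disjointness is immediate: $R\subseteq[n]$ and $\widetilde{R}_R=R+n\subseteq[n+1,2n]$, so $R\cap\widetilde{R}_R=\emptyset$.
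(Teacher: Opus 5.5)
Your proposal is correct and follows the paper's route: the corollary is Part~\eqref{thmr1} of Theorem~\ref{thmR1} with $T=R$, which gives $\widetilde{R}_R=R+n\subseteq[n+1,2n]$, hence $|\widetilde{R}_R|=|R|$ and $R\cap\widetilde{R}_R=\emptyset$. One small fix in your optional minimality check: first take the recovering subset $S+n$ to be minimal, because Theorem~\ref{reverse} is stated only for sets in $\mathcal{R}_i(\mathcal{G}_p)$.
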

\begin{corollary}\label{colR12}
    If there are $t$ disjoint recovery sets of even cardinalities in $\mathcal{R}_i(\mathcal{G)}$ for some $i\in[k]$, then there are at least $2t$ disjoint recovery sets in $\mathcal{R}_i(\mathcal{G}_p).$
\end{corollary}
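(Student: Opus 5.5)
The statement to prove is Corollary~\ref{colR12}. Let $R_1,\ldots,R_t\in\mathcal{R}_i(\mathcal{G})$ be pairwise disjoint with every $|R_\ell|$ even. For each $\ell$ I will keep $R_\ell$ itself and add its ``shifted copy'' $\widetilde{R}_{\ell}:=\widetilde{R}_{R_\ell}=R_\ell+n$. I then check that the $2t$ sets $R_1,\ldots,R_t,\widetilde{R}_1,\ldots,\widetilde{R}_t$ are pairwise disjoint minimal recovery sets for $x_i$ with respect to $\mathcal{G}_p$.

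First I establish membership. Each $R_\ell$ lies in $\mathcal{R}_i(\mathcal{G}_p)$, by the observation preceding Theorem~\ref{reverse} that $\mathcal{R}_i(\mathcal{G})\subseteq\mathcal{R}_i(\mathcal{G}_p)$. (The columns of $\mathcal{G}_p$ indexed by $[n]$ are the columns of $\mathcal{G}$ padded with a $0$, so spans and minimality transfer.) Each $\widetilde{R}_\ell$ lies in $\mathcal{R}_i(\mathcal{G}_p)$ by Corollary~\ref{colR11}, that is, Theorem~\ref{thmR1}(\ref{thmr1}) with $T=R_\ell$. This step needs $|R_\ell|$ even, so that the $|T|\,x_{k+1}$ term vanishes.

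Next I check disjointness. We have $R_\ell\subseteq[n]$ and $\widetilde{R}_m=R_m+n\subseteq[n+1,2n]$, so $R_\ell\cap\widetilde{R}_m=\emptyset$ for all $\ell,m$. For $\ell\neq m$ we have $R_\ell\cap R_m=\emptyset$ by hypothesis. Since $j\mapsto j+n$ is injective, it follows that $\widetilde{R}_\ell\cap\widetilde{R}_m=(R_\ell\cap R_m)+n=\emptyset$. Hence the $2t$ sets are pairwise disjoint, and this also shows they are distinct. This gives at least $2t$ disjoint recovery sets in $\mathcal{R}_i(\mathcal{G}_p)$.

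There is no real obstacle here. The only point needing care is the minimality of $\widetilde{R}_\ell$ as a recovery set with respect to $\mathcal{G}_p$, which the paper asserts only briefly in Theorem~\ref{thmR1}. If I had to justify it directly, I would argue as follows. Suppose a proper subset $A\cup(B+n)\subsetneq R_\ell+n$ recovers $x_i$; then $A=\emptyset$. Theorem~\ref{reverse}(i) forces $|B|$ to be even. Theorem~\ref{reverse}(ii) then shows that $B\subsetneq R_\ell$ recovers $x_i$ in $\mathcal{G}$, which contradicts the minimality of $R_\ell$.
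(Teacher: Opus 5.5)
Your proof is correct and is the intended argument: the original sets $R_\ell$ stay in $\mathcal{R}_i(\mathcal{G}_p)$, their shifts $R_\ell+n$ belong to it by Corollary~\ref{colR11}, and $[n]$ and $[n+1,2n]$ separate the two families, giving $2t$ pairwise disjoint sets. One small fix to your optional minimality check: first shrink the recovering proper subset to a \emph{minimal} one, because Theorem~\ref{reverse}(i) forces $|B|$ to be even only for minimal recovery sets. Alternatively, note directly that over $\mathbb{F}_2$ some $B'\subseteq B$ has columns summing to $(\mathbf{e}_i,0)^{\top}$, so $|B'|$ is even and $B'\subsetneq R_\ell$ recovers $x_i$ in $\mathcal{G}$, which contradicts the minimality of $R_\ell$.
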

\begin{corollary}\label{colR13}
    Let $i\in[k]$ and let $R\in\mathcal{R}_i(\mathcal{G})$ with $|R|$ odd. Suppose there exists a nonempty dual support
$D$ such that $R\cap D=\emptyset$, $|D|$ is odd, and no proper subset of $R\cup D$ of even
cardinality recovers $x_i$. Then $(R\cup D)+n\in\mathcal{R}_i(\mathcal{G}_p)$.
\end{corollary}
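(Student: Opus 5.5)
The plan is a direct verification that works with the columns of $\mathcal{G}_p$, rather than a specialization of Theorem~\ref{thmR1}(\ref{thmr3}). Choosing $F_R=R$ and $F_D=D$ there would give exactly $\mathcal{F}=(R\cup D)+n$. However, Theorem~\ref{thmR1}(\ref{thmr3}) needs $D$ to be the only dual support inside $R\cup D$, and the corollary replaces that with a hypothesis about even subsets. So I will redo both halves of the argument (that the set recovers $x_i$, and that it is minimal) under the new hypothesis. The key observation is that for $j\in[n]$ the column of $\mathcal{G}_p$ at position $j+n$ is $(\mathbf g_j,1)^\top$. Hence, for $S\subseteq[n]$,
\[
\sum_{j\in S+n}(\text{column } j \text{ of }\mathcal{G}_p)=\Bigl(\sum_{j\in S}\mathbf g_j,\ |S| \bmod 2\Bigr)^\top .
\]

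\emph{Recovery.} Take $S=R\cup D$. Since $R\cap D=\emptyset$ and $|R|$, $|D|$ are both odd, $|S|$ is even, so the last coordinate vanishes. Since $D=\operatorname{supp}(\mathbf h)$ with $\mathbf h\in\mathcal{C}^\perp$ binary, $\sum_{j\in D}\mathbf g_j=\mathbf 0$. Since $R$ is a minimal recovery set over $\mathbb{F}_2$, $\sum_{j\in R}\mathbf g_j=\mathbf e_i$: the coefficients in the span are $0/1$, and minimality forces all of them to equal $1$. Together these give $\sum_{j\in S}\mathbf g_j=\mathbf e_i$, so $(R\cup D)+n$ recovers $x_i$. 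Equivalently, in terms of codeword symbols,
\[
\sum_{j\in R\cup D}c_{j+n}=\sum_{j\in R}c_j+\sum_{j\in D}c_j+(|R|+|D|)x_{k+1}=x_i .
\]

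\emph{Minimality.} Suppose some proper subset $S'+n\subsetneq (R\cup D)+n$ recovers $x_i$ in $\mathcal{G}_p$. Over $\mathbb{F}_2$ this means that some $S''\subseteq S'$ has columns summing to $(\mathbf e_i,0)^\top$, where the last coordinate $0$ reflects that $\mathbf e_i$ sits in the first $k$ coordinates of $\mathbb F_2^{k+1}$. By the displayed identity this says exactly two things: $|S''|$ is even, and $\sum_{j\in S''}\mathbf g_j=\mathbf e_i$, so $S''$ recovers $x_i$ in $\mathcal{G}$. But $S''$ is a proper subset of $R\cup D$ of even cardinality that recovers $x_i$, which the hypothesis excludes. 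Hence $(R\cup D)+n$ is minimal and lies in $\mathcal{R}_i(\mathcal{G}_p)$.

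The only delicate point is the reduction from ``some proper subset recovers'' to ``some even subset $S''$ has columns of $\mathcal{G}$ summing to $\mathbf e_i$''. This rests on the fact that the last row of $\mathcal{G}_p$ restricted to positions $[n+1,2n]$ is all ones, so it tracks parity. In code-symbol terms, it is the fact that $x_{k+1}$ is a free information symbol independent of $x_1,\dots,x_k$, so its coefficient $|S''| \bmod 2$ must vanish. Everything else is a routine parity count.
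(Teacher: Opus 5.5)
Your proof is correct. It differs from the paper's route only in how minimality is obtained.

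The paper gives no separate proof. It lists the statement as the case $F_R=R$, $F_D=D$ of part~(iii) of Theorem~\ref{thmR1}, where $\mathcal F=(R\cup D)+n$. Your recovery step is exactly that specialization of the paper's computation.

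For minimality, the paper's argument in (iii) shows that a proper recovering subset $B+n$ would make $(R\cup D)\setminus B$ a nonempty dual support in $R\cup D$ other than $D$. You instead use the parity identity for the columns $(\mathbf g_j,1)^\top$. It shows that the corollary's hypothesis is a direct restatement of the minimality of $(R\cup D)+n$, so nothing from (iii) is needed.

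The mismatch of hypotheses you point out is only apparent. Suppose $N\neq D$ were another nonempty dual support in $R\cup D$. Then one of $N$ and $N\triangle D$ is an odd-weight dual support $N'\neq D$ inside $R\cup D$. In that case $R\triangle N'$ is a proper subset of $R\cup D$ of even size with column sum $\mathbf e_i$. So the corollary's hypothesis forces $D$ to be the only dual support in $R\cup D$. The paper's implicit derivation from (iii) is therefore valid once this short parity argument is supplied; your route simply avoids needing it.

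You also use the hypothesis only in the weak form \emph{no proper even subset of $R\cup D$ has column sum exactly $\mathbf e_i$}. This is the reading under which the statement has content. With the paper's span definition of recovery, $R\cup\{d\}$ for $d\in D$ is a proper even-size subset containing $R$ as soon as $|D|\ge 3$. Meanwhile $|D|=1$ would force a zero column in $\mathcal G$, which the paper excludes. So the literal hypothesis is never satisfied.
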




\begin{lemma}\label{lemR1}
   Let \(R\subseteq[n]\) be a set with \(|R|=r\geq 1\). Then  the number of nonempty subsets of $R$ with even-cardinality is $2^{r-1}-1$,
whereas the number of subsets of $R$ with odd-cardinality is
 $2^{r-1}$.
\end{lemma}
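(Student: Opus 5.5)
The plan is to count via the binomial theorem. A set $R$ with $|R|=r$ has exactly $\binom{r}{j}$ subsets of cardinality $j$. Let $E$ be the number of even-cardinality subsets of $R$, including the empty set, and let $O$ be the number of odd-cardinality subsets. Expanding $(1+1)^r$ and $(1-1)^r$ gives the two identities
\[
E+O=\sum_{j=0}^{r}\binom{r}{j}=2^r,
\qquad
E-O=\sum_{j=0}^{r}(-1)^j\binom{r}{j}=0 .
\]
The second identity requires $r\geq 1$, since $0^0=1$; this is exactly why the statement assumes $|R|=r\geq 1$.

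Solving the two identities gives $E=O=2^{r-1}$. The empty set has even cardinality $0$, so removing it leaves $2^{r-1}-1$ nonempty even-cardinality subsets. The count $O=2^{r-1}$ already counts only nonempty sets, so nothing needs to be subtracted there.

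For a combinatorial proof instead, I would fix an element $a\in R$ and use the involution $S\mapsto S\triangle\{a\}$ on the power set of $R$. This map is a bijection, and it changes the parity of $|S|$, so it pairs even subsets with odd subsets. Hence $E=O=2^r/2=2^{r-1}$.

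There is no real obstacle. The only points needing care are the hypothesis $r\geq 1$, which guarantees the element $a$ exists (equivalently, that $(1-1)^r=0$), and the subtraction of the empty set when counting nonempty even subsets. In the paper this count will be used with Theorem~\ref{thmR1}(\ref{thmr1}), where each nonempty even $T\subseteq R$ yields a distinct $\widetilde{R}_T$. It will also be used with the odd choices of $F_R$ and $F_D$ in Theorem~\ref{thmR1}(\ref{thmr3}).
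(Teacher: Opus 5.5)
Your proof is correct and complete. The paper states this lemma without proof, treating it as the standard identity $\sum_{j}(-1)^j\binom{r}{j}=0$ for $r\ge 1$, which is exactly your binomial computation; the involution $S\mapsto S\triangle\{a\}$ is an equally valid route.

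One small correction to your closing aside: the paper does not count the odd choices of $F_R,F_D$ in Part~\eqref{thmr3} of Theorem~\ref{thmR1}. It uses the odd count instead for the $2^{|D|-1}$ sets $\mathcal{D}_J$ in Theorem~\ref{thmR6}, for the $2^{|D|-2}$ sets $\mathcal{D}_E$ after Theorem~\ref{thmR7}, and, applied to $R\setminus\{j\}$, in the node-occurrence counts.
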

\begin{theorem}\label{thmR2}
    Let $\mathcal{C}$ be an $[n,k,d]$ binary linear code with a generator matrix $\mathcal{G}$. Let $R\in\mathcal{R}_i(\mathcal{G})$, where $i\in[k]$ with $|R|=r$. Then the number of new recovery sets in $\mathcal{R}_i(\mathcal{G}_p)$ obtained as in Theorem \ref{thmR1} by $R$ is $2^{r-1}-1.$
\end{theorem}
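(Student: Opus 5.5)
The plan is to read the statement as counting the sets $\widetilde{R}_T$ produced by Theorem~\ref{thmR1}(\ref{thmr1}), where $T$ runs over the nonempty even-cardinality subsets of $R$. The count then follows from Lemma~\ref{lemR1}, once we check two things: the map $T\mapsto\widetilde{R}_T$ is injective, and each $\widetilde{R}_T$ is genuinely new, meaning it lies in $\mathcal{R}_i(\mathcal{G}_p)\setminus\mathcal{R}_i(\mathcal{G})$.

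First, fix $R\in\mathcal{R}_i(\mathcal{G})$ with $|R|=r$. By Theorem~\ref{thmR1}(\ref{thmr1}), every nonempty $T\subseteq R$ with $|T|$ even gives a set $\widetilde{R}_T=(R\setminus T)\cup(T+n)\in\mathcal{R}_i(\mathcal{G}_p)$. By Lemma~\ref{lemR1}, there are exactly $2^{r-1}-1$ such subsets $T$.

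Second, we show injectivity. The index $T$ can be read off from $\widetilde{R}_T$, since $T=\{j-n:j\in\widetilde{R}_T\cap[n+1,2n]\}$. Hence $T\neq T'$ implies $\widetilde{R}_T\neq\widetilde{R}_{T'}$.

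Third, we show novelty. Since $T\neq\emptyset$, the set $\widetilde{R}_T$ meets $[n+1,2n]$, so it is not a subset of $[n]$. It therefore cannot belong to $\mathcal{R}_i(\mathcal{G})$, which is the copy of the old recovery sets sitting inside $\mathcal{R}_i(\mathcal{G}_p)$ on the first $n$ coordinates. The case $T=\emptyset$ gives back $R$ itself and is excluded, which is why the count is $2^{r-1}-1$ and not $2^{r-1}$.

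The step I expect to need the most care is the minimality of $\widetilde{R}_T$, which Theorem~\ref{thmR1} only asserts as ``easy to verify''. I would argue it as follows. Suppose $A\cup(B+n)\subsetneq\widetilde{R}_T$ recovers $x_i$, with $A\subseteq R\setminus T$ and $B\subseteq T$. The coefficient of $x_{k+1}$ must vanish, so $|B|$ is even, and then $A\cup B\subsetneq R$ recovers $x_i$ in $\mathcal{C}$. This contradicts the minimality of $R$. Since this is an earlier stated result, I may simply invoke it.

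Finally, if the statement is meant to include the sets from Theorem~\ref{thmR1}(\ref{thmr3}), note that those depend on an auxiliary dual support $D$ and not on $R$ alone. I would therefore state explicitly that the count here refers to the construction in part (\ref{thmr1}).
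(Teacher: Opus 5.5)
Your proposal is correct and follows the paper's route. The paper's proof is simply Part~\eqref{thmr1} of Theorem~\ref{thmR1} combined with Lemma~\ref{lemR1}; your checks of injectivity of $T\mapsto\widetilde{R}_T$, of novelty ($\widetilde{R}_T\not\subseteq[n]$), and of minimality of $\widetilde{R}_T$ via the parity of $|B|$ make explicit what the paper leaves implicit in the word ``unique'' and the phrase ``easy to verify''.
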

\begin{proof}
    This theorem is an immediate consequence of Part \eqref{thmr1} of Theorem \ref{thmR1} and Lemma \ref{lemR1}
\end{proof}



\noindent\textbf{Counting occurrence of the nodes $\boldsymbol{j}$ and $\boldsymbol{j+n}$.}
A node $j\in R \subset [n]$ appears in $\widetilde{R}_T$ if and only if $j\notin T$ and $j+n$ appears in $\widetilde{R}_T$ if and only if $j\in T.$ Let $|R|=r$. Since the number of possible choices for $T$ with  $|T|$ even such that $j\notin T$ is $2^{r-2}-1$, there are $2^{r-2}-1$ recovery sets $\widetilde{R}_T\in\mathcal{R}_i(\mathcal{G}_p)$ containing $j.$ The number of possible choices for $T$ with  $|T|$ even such that $j\in T$ is $2^{r-2}.$ Equivalently, this is the number of recovery sets of form $\widetilde{R}_T\in\mathcal{R}_i(\mathcal{G}_p)$ which contains $j+n.$




   We now extend the discussion to a systematic generator matrix of $\mathcal{C}$. Our objective is to determine whether additional recovery sets arise from the systematic structure.
   Since systematic generator matrices possess a column for each $i\in[k]$, this analysis provides a characterization of the recovery sets involving the node $i+n$. 
\begin{theorem}\label{thmR4}
    Let $\mathcal{C}$ be an $[n,k,d]$ binary linear code with a systematic generator matrix $\mathcal{G}$. Let $i\in[k]$, and let $j\in[n]\setminus\{i\}$. Then, for each $j$, there exists a recovery set $\widetilde{R}_j\in\mathcal{R}_i(\mathcal{G}_p)$, where $\widetilde{R}_j=\{j,j+n,i+n\}.$ 

\end{theorem}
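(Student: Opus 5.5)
The plan is to verify two things directly: that $\{j,j+n,i+n\}$ recovers $x_i$, and that no proper subset of it does.

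\emph{Recovery.} Since $\mathcal{G}=(I_k\mid A)$ is systematic, column $i$ of $\mathcal{G}$ is $\mathbf e_i$, so $c_i=x_i$. In $\mathcal{C}_p$ we have $c_{n+\ell}=c_\ell+x_{k+1}$ for every $\ell\in[n]$. Hence
\[
c_j+c_{j+n}+c_{i+n}=c_j+(c_j+x_{k+1})+(x_i+x_{k+1})=x_i .
\]
Equivalently, in terms of columns of $\mathcal{G}_p$,
\[
(\mathbf g_j,0)+(\mathbf g_j,1)+(\mathbf e_i,1)=(\mathbf e_i,0),
\]
which is exactly $\mathbf e_i\in\mathbb F_2^{k+1}$. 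So $\widetilde R_j$ is a recovery set for $x_i$.

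\emph{Minimality.} Over $\mathbb F_2$, a subset recovers $x_i$ exactly when the sum of its columns equals $\mathbf e_i\in\mathbb F_2^{k+1}$. So it suffices to check every proper nonempty subset of $\{j,j+n,i+n\}$ and show that no sum of its columns equals $(\mathbf e_i,0)$.

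\emph{Singletons.}
\begin{itemize}
\item $\{j\}$ has column $(\mathbf g_j,0)$. This equals $(\mathbf e_i,0)$ only if $\mathbf g_j=\mathbf e_i$, i.e.\ $j$ is a repeated systematic column for $i$ (since $j\neq i$). The standing assumption that $\mathcal G$ has no repeated systematic columns excludes this.
\item $\{j+n\}$ and $\{i+n\}$ have last coordinate $1$, so they fail.
\end{itemize}

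\emph{Pairs.}
\begin{itemize}
\item $\{j,j+n\}$ sums to $(\mathbf 0,1)$, which recovers $x_{k+1}$, not $x_i$.
\item $\{j,i+n\}$ has last coordinate $1$.
\item $\{j+n,i+n\}$ sums to $(\mathbf g_j+\mathbf e_i,0)$. This equals $(\mathbf e_i,0)$ only if $\mathbf g_j=\mathbf 0$, which is excluded because $\mathcal G$ has no zero column.
\end{itemize}
Hence $\widetilde R_j\in\mathcal{R}_i(\mathcal{G}_p)$.

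The main obstacle is this minimality step. It relies precisely on the standing hypotheses: no zero column, and no repeated systematic column for $x_i$. These must be invoked explicitly when ruling out the singleton $\{j\}$ and the pair $\{j+n,i+n\}$.

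As a consistency check, $|\widetilde R_j\cap[n+1,2n]|=2$ is even, in agreement with Theorem~\ref{reverse}(i). The induced set under the map of Theorem~\ref{reverse}(ii) is $\{j\}\triangle\{j,i\}=\{i\}$, which is the systematic singleton recovery set of $x_i$ in $\mathcal{R}_i(\mathcal{G})$.
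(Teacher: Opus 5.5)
Your proof is correct and follows the paper's route: you use the same computation $c_j+c_{j+n}+c_{i+n}=c_i=x_i$, based on the systematic column for $x_i$. The paper's proof stops after this recovery check, so your explicit check of all six proper nonempty subsets supplies the minimality that membership in $\mathcal{R}_i(\mathcal{G}_p)$ actually requires, and you correctly identify the standing hypotheses (no zero column, no repeated systematic column) as exactly what rule out $\{j+n,i+n\}$ and $\{j\}$, respectively. One small wording fix: a set recovers $x_i$ iff \emph{some subset} of it has column sum $\mathbf e_i$, not the set itself, and that is precisely why checking every proper nonempty subset suffices.
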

   
  \begin{proof}
     Fix an information symbol $x_i,$ where $i\in[k].$ Given $j\in[n]\setminus\{i\}$, we have $$\sum_{l\in\widetilde{R}_j}c_l=c_j+c_{j+n}+c_{i+n}=c_j+c_j+x_{k+1}+c_i+x_{k+1}=c_i=x_i.$$ 
     
     This implies that $\widetilde{R}_j\in\mathcal{R}_i(\mathcal{G}_p).$ 
  \end{proof}

\begin{corollary}\label{colR41}
For a systematic generator matrix $\mathcal{G}$ of $\mathcal{C}$, if there exists $t$ recovery sets of cardinality $3$ in $\mathcal{R}_i(\mathcal{G})$ such that a node $j\in[n]$ lies in $t_j$ of these recovery sets, then there are at least $3t+n-1$ new recovery sets of cardinality $3$ in $\mathcal{R}_i(\mathcal{G}_p)$, where $j$ lies in $t_j+1$ of these recovery sets and $j+n$ lies in $2t_j+1$ of these recovery sets.
\end{corollary}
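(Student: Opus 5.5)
The plan is to exhibit the required sets explicitly, using the two constructions already available: Part~\eqref{thmr1} of Theorem~\ref{thmR1} applied to each cardinality-$3$ recovery set, and Theorem~\ref{thmR4}. Then show all of them are distinct and new, and finally count node incidences. Since every set we produce contains a coordinate in $[n+1,2n]$, none lies in $\mathcal{R}_i(\mathcal{G})$, so all are new. The minimality of each set is already guaranteed by the cited results.

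\textbf{Step 1 (sets from Theorem~\ref{thmR1}).} Let $R=\{a,b,c\}\in\mathcal{R}_i(\mathcal{G})$. By Theorem~\ref{thmR2} with $r=3$, we obtain $2^{2}-1=3$ sets $\widetilde{R}_T$, one for each $2$-subset $T\subseteq R$. These are $\{a,b+n,c+n\}$, $\{b,a+n,c+n\}$ and $\{c,a+n,b+n\}$, all of cardinality $3$. We check distinctness across all choices of $(R,T)$. The map $\widetilde{R}_T\mapsto\bigl(\widetilde{R}_T\cap[n]\bigr)\cup\{j-n:j\in\widetilde{R}_T\cap[n+1,2n]\}$ recovers $R$, and $T$ is read off as the second-half part. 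Hence different pairs $(R,T)$ give different sets, and this family has exactly $3t$ members.

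\textbf{Step 2 (sets from Theorem~\ref{thmR4}).} Theorem~\ref{thmR4} supplies the $n-1$ sets $\{j,j+n,i+n\}$, $j\in[n]\setminus\{i\}$, which are pairwise distinct. None of them coincides with a set from Step~1. Indeed, $\{j,j+n,i+n\}=\{a,b+n,c+n\}$ would force $a=j$ and $\{b,c\}=\{j,i\}$, hence $a\in\{b,c\}$, contradicting that $R$ has three distinct elements. This yields at least $3t+n-1$ new recovery sets of cardinality $3$.

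\textbf{Step 3 (incidence counts).} These follow the ``counting occurrence'' paragraph preceding the corollary, specialised to $r=3$.
\begin{itemize}
\item A node $j$ appears in the first half of $\widetilde{R}_T$ exactly when $j\in R\setminus T$. For each of the $t_j$ sets $R\ni j$, exactly one $2$-subset $T$ avoids $j$, giving $2^{r-2}-1=1$ set per $R$ and $t_j$ in total. The set $\{j,j+n,i+n\}$ adds one more, for a total of $t_j+1$.
\item The node $j+n$ lies in $\widetilde{R}_T$ exactly when $j\in T$, which happens for $2^{r-2}=2$ choices of $T$ per $R$, giving $2t_j$. Again $\{j,j+n,i+n\}$ adds one, for a total of $2t_j+1$.
\end{itemize}

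The step I expect to need care is the bookkeeping at the exceptional index $j=i$. With $\mathcal{G}$ systematic and no repeated systematic columns, $\{i\}$ is the unique minimal recovery set containing $i$, so $t_i=0$. Also, $i+n$ lies in all $n-1$ sets of Theorem~\ref{thmR4}, so its count is $n-1$ rather than $2t_i+1=1$. I would therefore state the incidence claim for $j\in[n]\setminus\{i\}$ and remark separately on the node $i$ and the node $i+n$. The count $3t+n-1$ itself is unaffected by this.
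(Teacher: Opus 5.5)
Your proposal is correct and follows the argument the paper intends; the paper gives no separate proof and treats the corollary as immediate. You apply Part~(i) of Theorem~\ref{thmR1} with $r=3$, together with the node-counting paragraph, to get $3t$ sets, and then add the $n-1$ sets $\{j,j+n,i+n\}$ from Theorem~\ref{thmR4}. Your distinctness checks are a useful addition, and your caveat is right: the incidence counts hold only for $j\in[n]\setminus\{i\}$, since $t_i=0$, node $i$ lies in none of these sets, and $i+n$ lies in all $n-1$ of the Theorem~\ref{thmR4} sets. (The proof of Theorem~\ref{thmR4} only checks recovery; minimality of those sets follows from the standing assumption that $\mathcal{G}$ has no zero column and no repeated systematic column.)
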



\begin{remark}
   Given a systematic generator matrix $\mathcal{G}$ of $\mathcal{C}$, each odd cardinality subset $J\subset[n]$ gives rise to a recovery set of the form $J\cup J+n\cup \{i+n\}$ for $x_i$ corresponding to $\mathcal{G}_p$, but for $|J|\geq3$ it is not a minimal recovery set.
\end{remark}
The following theorems provide methods to obtain recovery sets in $\mathcal{R}_i(\mathcal{G}_p)$ (recall that recovery sets in $\mathcal{R}_i(\mathcal{G}_p)$ are minimal) using an odd cardinality subset of minimal inclusion dual support.
\begin{theorem}\label{thmR6}
    Let $\mathcal{G}$ be a systematic generator matrix of $\mathcal{C}$. Let $i\in[k]$, and let $\mathbf{h}\in\mathcal{C}^\perp$ such that $D=\operatorname{supp}(\mathbf{h})$ is a minimal-inclusion dual support and $i\notin D$. Suppose that there is no element $R\in\mathcal{R}_i(\mathcal{G})$ satisfying $R\subsetneq D.$ For $J\subseteq D$ with $|J|$ odd, define
\[
\mathcal{D}_J
=
\big(D\setminus J\big)
\cup(J+n)
\cup\{i+n\}.
\]
Then $\mathcal{D}_J\in\mathcal{R}_i(\mathcal{G}_p)$. Consequently, there are $2^{|D|-1}$ such distinct recovery sets in $\mathcal{R}_i(\mathcal{G}_p)$, each containing $i+n$.
\end{theorem}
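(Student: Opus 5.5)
We split the argument into three parts: show that $\mathcal{D}_J$ recovers $x_i$, show that it is minimal, and count the sets.

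\emph{Recovery.} Since $\mathbf h\in\mathcal C^\perp$ is binary, $\sum_{j\in D}c_j=0$ for every $\mathbf c\in\mathcal C$. Using $c_{j+n}=c_j+x_{k+1}$ and $c_i=x_i$ (systematic column), we get
\[
\sum_{l\in\mathcal D_J}c_l=\sum_{j\in D\setminus J}c_j+\sum_{j\in J}(c_j+x_{k+1})+(x_i+x_{k+1})=\sum_{j\in D}c_j+(|J|+1)x_{k+1}+x_i=x_i,
\]
because $|J|$ is odd. So $\mathcal D_J$ is a recovery set for $x_i$. It is a legitimate set of size $|D|+1$, since $i\notin D$ gives $i+n\notin J+n$.

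\emph{Minimality.} Suppose a proper subset $M=A\cup(B+n)\subsetneq\mathcal D_J$ recovers $x_i$, with $A\subseteq D\setminus J$ and $B\subseteq J\cup\{i\}$. Writing the recovery at the level of columns of $\mathcal G_p$, we need $\mathbf e_i=\sum_{a\in A}(\mathbf g_a,0)+\sum_{b\in B}(\mathbf g_b,1)$. Reading off the last coordinate, $|B|$ must be even (this is Theorem~\ref{reverse}(i)). Reading off the first block, $A\triangle B$ spans $\mathbf e_i$ as a sum of columns of $\mathcal G$. We then split on whether $i\in B$.
\begin{itemize}
\item \textbf{Case $i\in B$.} Then $\sum_{j\in A\cup(B\setminus\{i\})}\mathbf g_j=0$, with $A\cup(B\setminus\{i\})\subseteq D$. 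By minimal-inclusion of $D$, this set is either $\emptyset$ or all of $D$.
 \begin{itemize}
 \item If it equals $D$, then $B=J\cup\{i\}$ and $A=D\setminus J$, so $M=\mathcal D_J$, a contradiction.
 \item If it is empty, then $B=\{i\}$ has odd size, contradicting the parity condition.
 \end{itemize}
\item \textbf{Case $i\notin B$.} Then $A\cup B\subseteq D$ and $\sum_{j\in A\cup B}\mathbf g_j=\mathbf e_i$. So $A\cup B$ is a recovery set for $x_i$ contained in $D$. If it is a proper subset of $D$, it contains a minimal recovery set $R\subsetneq D$, contradicting the hypothesis. If $A\cup B=D$, then $\sum_{D}\mathbf g_j=0\neq\mathbf e_i$, also a contradiction.
\end{itemize}
Hence $\mathcal D_J$ is minimal, i.e.\ $\mathcal D_J\in\mathcal R_i(\mathcal G_p)$.

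\emph{Counting.} Distinct odd subsets $J\subseteq D$ give distinct sets $\mathcal D_J$, because $J$ is recovered as $\{l-n: l\in\mathcal D_J\cap[n+1,2n]\}\setminus\{i\}$. Lemma~\ref{lemR1} gives $2^{|D|-1}$ odd subsets of $D$, so there are exactly $2^{|D|-1}$ such recovery sets, and each contains $i+n$.

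The main obstacle is the minimality case analysis: every subcase must be closed by either the parity constraint on $B$ or the two hypotheses on $D$ (minimal-inclusion, and no $R\in\mathcal R_i(\mathcal G)$ strictly inside $D$). The case $i\notin B$ with $A\cup B=D$ needs particular care and relies on $\mathbf e_i\neq 0$.
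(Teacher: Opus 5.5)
Your proof is correct and follows essentially the same route as the paper. The paper also writes a putative proper recovering subset as $A\cup(B+n)\cup(\epsilon\{i+n\})$ and uses the parity of its second-half part. It closes the case $i+n\notin T$ with the hypothesis that no $R\in\mathcal{R}_i(\mathcal{G})$ lies strictly inside $D$, and the case $i+n\in T$ with the minimal-inclusion property of $D$. Your version is slightly more careful: you explicitly handle the boundary cases $A\cup B=D$ and $B=\{i\}$, which the paper skips when it asserts $A\cup B\subsetneq D$.
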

\begin{proof}
    
  Since $|J|$ is odd, we have
\[\sum_{j\in \mathcal{D}_J}c_j=\Big(\sum_{j\in D\setminus J}c_j\Big)+\Big(\sum_{j\in J}c_{j+n}\Big)+c_{i+n}=
\sum_{j\in D}c_j+c_i=c_i.\]
Consequently, $\mathcal{D}_J$ is a recovery set for the $i$-th information symbol.

Next, we show that $\mathcal{D}_J$ is a minimal recovery set. On the contrary, suppose that there exists $T\subsetneq \mathcal{D}_J$ such that $T\in \mathcal{R}_i(\mathcal{G}_p).$ Then, $T$ can be written as \[T=A \cup (B+n)\cup \big(\epsilon \{i+n\}\big),\] where $A\subseteq D\setminus J,  B\subseteq J$, and $\epsilon\in\{0,1\}.$ We consider two cases: (i) $ \epsilon=0$, and (ii) $\epsilon=1.$

\begin{enumerate}
    \item[(i)] 
Let $\epsilon=0.$ Then $|B|$ is even and 
 we have \[\sum_{j\in T}c_j=\sum_{j\in A }c_j+\sum_{j\in B }c_j=c_i,\] which implies that $A\cup B \in \mathcal{R}_i(\mathcal{G})$. But $A\cup B\subsetneq D$, which is a contradiction. Hence, $\mathcal{D}_J\in \mathcal{R}_i(\mathcal{G}_p)$.

\item[(ii)] Let $\epsilon=1.$
Then $|B|$ is odd and $$\sum_{j\in A \cup B}c_j=0.$$ Furthermore, $ A \cup B\subsetneq D,$ and since $\{i+n\}$ alone does not recover $x_i$, we have $ A \cup B\neq \emptyset$. By assumption, $D$ is a minimal-inclusion dual support, and hence, we obtain a contradiction. Therefore, in this case as well, $\mathcal{D}_J\in \mathcal{R}_i(\mathcal{G}_p)$.\end{enumerate} 
\end{proof}

\begin{theorem}\label{thmR7}
Let $\mathcal{C}$ be a binary linear $[n,k,d]$ code with a systematic generator matrix $\mathcal{G}$. For $i\in[k]$, let $D=\operatorname{supp}(\mathbf h),$ be an inclusion-minimal support of a nonzero codeword of $\mathbf h\in \mathcal{C}^{\perp}$, and $i\in D$.  Let $\ell\in[n]\setminus D$ such that that there does not exists a recovery set $R$ for $x_i$ satisfying 
$\ell\in R\subsetneq D\cup\{\ell\}$.

For $E\subseteq D\setminus\{i\}$ with $|E|$ odd, define
\[
\mathcal{D}_E
=
\bigl(D\setminus(E\cup\{i\})\bigr)
\cup(E+n)
\cup\{\ell,\ell+n\}.
\]
Then $\mathcal{D}_E\in\mathcal{R}_i(\mathcal{G}_p)$.
\end{theorem}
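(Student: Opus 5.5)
The plan is to verify two things directly: that $\mathcal{D}_E$ recovers $x_i$, and that no proper subset of $\mathcal{D}_E$ does. This follows the template of the proof of Theorem~\ref{thmR6}. Throughout we use that $\mathcal{G}$ is systematic, so the $i$-th column is $\mathbf e_i$ and $c_i=x_i$. We also use that $D=\operatorname{supp}(\mathbf h)$ with $\mathbf h$ binary, which gives $\sum_{j\in D}c_j=0$, i.e.\ $c_i=\sum_{j\in D\setminus\{i\}}c_j$.

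\emph{Recovery.} Substituting $c_{j+n}=c_j+x_{k+1}$ gives
\[
\sum_{j\in\mathcal{D}_E}c_j=\sum_{j\in D\setminus(E\cup\{i\})}c_j+\sum_{j\in E}(c_j+x_{k+1})+c_\ell+(c_\ell+x_{k+1})=\sum_{j\in D\setminus\{i\}}c_j+(|E|+1)x_{k+1}.
\]
Since $|E|$ is odd, the $x_{k+1}$ term vanishes. The remaining sum equals $c_i=x_i$, so $\mathcal{D}_E$ recovers $x_i$.

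\emph{Minimality.} Suppose some $T\subsetneq\mathcal{D}_E$ recovers $x_i$. Write
\[
T=A\cup(B+n)\cup\epsilon\{\ell\}\cup\epsilon'\{\ell+n\},
\]
where $A\subseteq D\setminus(E\cup\{i\})$, $B\subseteq E$ and $\epsilon,\epsilon'\in\{0,1\}$. Expanding $\sum_{j\in T}c_j$ as above, the coefficient of $x_{k+1}$ is $|B|+\epsilon'$. Because $x_{k+1}$ is independent of $x_i$ and of $\mathcal{C}$, this coefficient must be even. We then split into four cases according to $(\epsilon,\epsilon')$.
\begin{itemize}
\item If $\epsilon=\epsilon'=1$, then $|B|$ is odd and the two $c_\ell$ terms cancel, leaving $\sum_{j\in A\cup B}c_j=c_i$. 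Hence the columns indexed by $A\cup B\cup\{i\}\subseteq D$ sum to zero, and this set is nonempty. Inclusion-minimality of $D$ forces $A\cup B\cup\{i\}=D$. Because $A$ and $B$ lie in the disjoint pieces $D\setminus(E\cup\{i\})$ and $E$, this gives $A=D\setminus(E\cup\{i\})$ and $B=E$, so $T=\mathcal{D}_E$, a contradiction.
\item If $\epsilon=\epsilon'=0$, then $|B|$ is even and again the columns of $A\cup B\cup\{i\}\subseteq D$ sum to zero. Minimality of $D$ gives $B=E$, contradicting $|E|$ odd.
\item If $\epsilon\neq\epsilon'$, a single $c_\ell$ survives, and the parity condition makes the $x_{k+1}$ terms cancel. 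This yields $\sum_{j\in A\cup B\cup\{\ell\}}c_j=x_i$. So $R:=A\cup B\cup\{\ell\}$ is a recovery set for $x_i$ with $\ell\in R\subseteq (D\setminus\{i\})\cup\{\ell\}\subsetneq D\cup\{\ell\}$. This contradicts the hypothesis on $\ell$.
\end{itemize}

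The main point needing care is the cases $\epsilon=\epsilon'$. There one must convert the relation $\sum_{A\cup B}c_j=c_i$ into a zero-sum of columns of $\mathcal{G}$ inside $D$, which is exactly where the systematic column $\mathbf e_i$ is used. One must also make sure this dependency is nonempty; that holds because it contains $i$. Inclusion-minimality of $D$ then pins down $A$ and $B$ exactly. The mixed cases are immediate from the hypothesis on $\ell$, since the hypothesis is stated for arbitrary, not necessarily minimal, recovery sets $R$ containing $\ell$.
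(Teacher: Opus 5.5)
Your proof is correct and is essentially the paper's own argument. It uses the same recovery computation and the same decomposition $T=A\cup(B+n)\cup F$ with $F\subseteq\{\ell,\ell+n\}$. When $|F|\in\{0,2\}$ it invokes inclusion-minimality of $D$ through the systematic column $\mathbf e_i$, and when $|F|=1$ it applies the hypothesis on $\ell$ to the set $A\cup B\cup\{\ell\}$, whose columns sum exactly to $\mathbf e_i$.

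You are slightly more careful than the paper in two places. You make the parity condition $|B|+\epsilon'\equiv 0 \pmod 2$ explicit. You also note $A\cup B\cup\{\ell\}\subseteq(D\setminus\{i\})\cup\{\ell\}\subsetneq D\cup\{\ell\}$, which is what the strict inclusion in the hypothesis requires; the paper only records $\subseteq D\cup\{\ell\}$.
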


\begin{proof}
Since $\mathbf h\in\mathcal{C}^{\perp}$ 
$$\sum_{j\in\mathcal{D}_E}c_j=
\sum_{j\in D\setminus(E\cup\{i\})}
c_j
+
\sum_{j\in E}
(c_j+x_{k+1})
+c_\ell+c_\ell+x_{k+1}=
c_i.
$$
 Thus, $\mathcal{D}_E$ is a recovery set for the $i$-th information symbol.

It remains to prove that $\mathcal{D}_E$ is minimal. Suppose, to the contrary, that there exists a proper subset $
T\subsetneq\mathcal{D}_E$
such that $T\in\mathcal{R}_i(\mathcal{G}_p)$. We may write
$T=A\cup(B+n)\cup F$, where $A\subseteq D\setminus(E \cup\{i\}), B\subseteq E,
F\subseteq\{\ell,\ell+n\}.$
We consider the possible choices of $F$.

\medskip
\begin{enumerate}[(i)]
    \item $F=\emptyset$ or $F=\{\ell,\ell+n\}$. Then 
\[
\sum_{j\in T}c_j=\sum_{j\in A\cup B}c_j=c_i.
\]
Thus, if
$A\cup B\cup\{i\}\subsetneq D$,
we obtain a contradiction to the inclusion-minimality of $D$ as the support of a codeword of dual. It remains to consider the case
$A\cup B\cup\{i\}=D$.
In this case, we have $B=E$. If $F=\emptyset$, then $|B|$ is even, contradicting the assumption that $|E|$ is odd. On the other hand, if $F=\{\ell,\ell+n\}$, then $|B|$ is odd, which implies that $T=\mathcal{D}_E$, again contradicting the assumption that $T\subsetneq\mathcal{D}_E$.
\item   $F=\{\ell\}$ or $F=\{\ell+n\}$. In this case, we have
\[
\sum_{j\in T}c_j=\sum_{j\in A\cup B}c_j=c_i+c_\ell.
\]
Consequently,
 $(A\cup B)\cup\{\ell\}$ is a recovery set for $x_i$ in $\mathcal{C}$ satisfying $\ell\in(A\cup B)\cup\{\ell\}
\subseteq D\cup\{\ell\}$,
contradicting the assumption on $\ell$.
\end{enumerate}
Therefore,
$\mathcal{D}_E \in\mathcal{R}_i(\mathcal{G}_p)$,
which completes the proof.
\end{proof}
Since there are $2^{|D|-2}$ odd cardinality subsets $E$ of $D$, for each $\ell\in[n]\setminus D$, satisfying the condition of Theorem \ref{thmR7}, there exist $2^{|D|-2}$ recovery sets $D_E$. 
The idea of Theorem \ref{thmR7} can be extended by replacing the element $\ell$ with a subset of odd cardinality to define \(\mathcal{D}_E\). This set still recovers $x_i$ but is not a minimal recovery set.

Now we introduce the following lemma, which plays a crucial role in proving Theorem \ref{thmR8}.

\begin{lemma}\label{nosupport}
Let $i\in[k]$ and $R_p\in\mathcal{R}_i(\mathcal{G}_p)$ such that $R_p^{(1)}\cap R_p^{(2)}\neq \emptyset$. Then $R_p^{(1)}\triangle R_p^{(2)}$ contains no nonempty dual support, i.e.,
$R_p^{(1)}\triangle R_p^{(2)}\in\mathcal{R}_i(\mathcal{G}).$
\end{lemma}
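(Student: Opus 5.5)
The plan is a proof by contradiction: assuming a dual support $D$ inside $R_p^{(1)}\triangle R_p^{(2)}$, I will exhibit a proper subset of $R_p$ that still recovers $x_i$. The step that makes this work is using the common element of $R_p^{(1)}$ and $R_p^{(2)}$ to cancel the stray $x_{k+1}$ term. Let $R_p\in\mathcal{R}_i(\mathcal{G}_p)$ with $R_p^{(1)}\cap R_p^{(2)}\neq\emptyset$. By Remark~\ref{cap12} this intersection is a single element, say $m$, so both $m$ and $m+n$ lie in $R_p$. Suppose, for contradiction, that there is a nonempty $D=\operatorname{supp}(\mathbf h)\subseteq R_p^{(1)}\triangle R_p^{(2)}$ with $\mathbf h\in\mathcal{C}^\perp$. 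Since $D$ lies in the symmetric difference, $m\notin D$.

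The first step fixes the parity of $|R_p^{(2)}\cap D|$. If it were even, then $\bigl(R_p^{(1)}\setminus D\bigr)\cup\bigl((R_p^{(2)}\setminus D)+n\bigr)$ would recover $x_i$. This set is a proper subset of $R_p$ because $D\neq\emptyset$, which contradicts minimality. This is the argument of Remark~\ref{2D}, which works equally when $D$ is all of $R_p^{(1)}\triangle R_p^{(2)}$. So $|R_p^{(2)}\cap D|$ is odd.

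The second step removes both the $D$-part and the pair $\{m,m+n\}$ from $R_p$. Define
\[
S=\Bigl(R_p^{(1)}\setminus (D\cup\{m\})\Bigr)\cup\Bigl(\bigl(R_p^{(2)}\setminus (D\cup\{m\})\bigr)+n\Bigr)\subsetneq R_p .
\]
The removed positions contribute
\[
\sum_{j\in D}c_j+|R_p^{(2)}\cap D|\,x_{k+1}+c_m+(c_m+x_{k+1}).
\]
The first sum vanishes because $\mathbf h\in\mathcal{C}^\perp$, and the two copies of $x_{k+1}$ cancel since $|R_p^{(2)}\cap D|$ is odd. So the removed positions contribute $0$, and $\sum_{j\in S}c_j=\sum_{j\in R_p}c_j=x_i$. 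Thus $S$ is a proper subset of $R_p$ recovering $x_i$, contradicting minimality. Hence $R_p^{(1)}\triangle R_p^{(2)}$ contains no nonempty dual support.

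For the final assertion, Theorem~\ref{reverse}(i) gives $|R_p^{(2)}|$ even, so the computation preceding Remark~\ref{cap12} shows that $R_p^{(1)}\triangle R_p^{(2)}$ recovers $x_i$. Minimality then follows from the first case of the argument preceding Remark~\ref{rem:D-unique}, the case in which no dual support is contained in the symmetric difference. Therefore $R_p^{(1)}\triangle R_p^{(2)}\in\mathcal{R}_i(\mathcal{G})$.

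I expect the main obstacle to be only bookkeeping. One must check that $m\notin D$, so that the two deletions are disjoint. One must also track the parity of the $x_{k+1}$ coefficients, which requires the oddness of $|R_p^{(2)}\cap D|$ from the first step.
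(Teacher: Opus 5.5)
Your proof is correct and is essentially the paper's argument. Your set $S$ coincides with the paper's $\bigl(R_p^{(1)}\setminus(R_p^{(2)}\cup D)\bigr)\cup\bigl((R_p^{(2)}\setminus(R_p^{(1)}\cup D))+n\bigr)$: the paper shows it recovers $x_i$ by using Remarks~\ref{cap12}, \ref{2D} and the evenness of $|R_p^{(2)}|$ to get $|R_p^{(2)}\setminus(R_p^{(1)}\cup D)|$ even, while you cancel the $x_{k+1}$ coming from $\{m,m+n\}$ directly against the odd $|R_p^{(2)}\cap D|$ (so you do not actually need Remark~\ref{cap12}), and you also explicitly justify the final membership in $\mathcal{R}_i(\mathcal{G})$ that the paper leaves implicit.
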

\begin{proof}
    On the contrary, suppose that $D\subseteq R_p^{(1)}\triangle R_p^{(2)}$ is a nonempty dual support. Note that $$R_p^{(2)}=\bigl(R_p^{(2)}\setminus(R_p^{(1)}\cup D)\bigr)\cup (R_p^{(2)}\cap R_p^{(1)}) \cup (R_p^{(2)}\cap D).$$ Since $|R_p^{(2)}|$ is even, by Remarks \ref{cap12}  and \ref{2D}, the cardinality of the set $\bigl(R_p^{(2)}\setminus(R_p^{(1)}\cup D)\bigr)$ is even.

We already know that $\bigl(R_p^{(1)}\triangle R_p^{(2)}\bigr)\setminus D$ recovers $x_i.$ Moreover, \[\bigl(R_p^{(1)}\triangle R_p^{(2)}\bigr)\setminus D=\bigl(R_p^{(1)}\setminus(R_p^{(2)}\cup D)\bigr)\cup \bigl(R_p^{(2)}\setminus(R_p^{(1)}\cup D)\bigr).\] It follows that $\bigl(R_p^{(1)}\setminus(R_p^{(2)}\cup D)\bigr)\cup \bigl(\big(R_p^{(2)}\setminus(R_p^{(1)}\cup D)\big)+n\bigr)$ also recovers $x_i$, and is properly contained in $R_p.$ This contradicts the minimality of $R_p.$ Hence the proof.
    
\end{proof}
The following theorem provides a systematic characterization of all minimal recovery sets for $i\in[k]$ in $\mathcal{G}_p$, i.e., $\mathcal{R}_i(\mathcal{G}_p)$, in terms of the recovery sets associated with the matrix $\mathcal G$, namely $\mathcal{R}_i(\mathcal G)$. In particular, it establishes a direct correspondence between the recovery set structure of $\mathcal C$ and  $\mathcal{C}_p$, thereby giving a complete method for determining $\mathcal{R}_i(\mathcal{G}_p)$ from $\mathcal{R}_i(\mathcal G)$.

\begin{theorem}\label{thmR8}
Let $\mathcal{C}$ be an $[n,k,d]$ binary linear code with a systematic generator matrix $\mathcal{G}$, and let $i\in[k]$. For each $R\in\mathcal{R}_i(\mathcal{G})$, let $\mathscr{K}_R$ denote the collection of recovery sets obtained from $R$ by Theorem~\ref{thmR1}. Let $\mathscr{L}$, $\mathscr{M}$, and $\mathscr{N}$ denote the collections of recovery sets obtained by Theorems~\ref{thmR4}, \ref{thmR6}, and \ref{thmR7}, respectively. Then
\[
\mathcal{R}_i(\mathcal{G}_p)
=
\mathcal{R}_i(\mathcal{G})
\cup \mathscr{L}\cup \mathscr{M}\cup \mathscr{N}
\cup
\bigcup_{R\in\mathcal{R}_i(\mathcal{G})}\mathscr{K}_R.
\]
\end{theorem}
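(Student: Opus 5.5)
The inclusion ``$\supseteq$'' is already established. Before Section~\ref{sec3} we noted $\mathcal{R}_i(\mathcal{G})\subseteq\mathcal{R}_i(\mathcal{G}_p)$. Theorems~\ref{thmR1}, \ref{thmR4}, \ref{thmR6} and \ref{thmR7} each prove that the sets they produce lie in $\mathcal{R}_i(\mathcal{G}_p)$. So the plan is to prove the reverse inclusion: every $R_p\in\mathcal{R}_i(\mathcal{G}_p)$ falls into one of the listed families.

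Fix $R_p\in\mathcal{R}_i(\mathcal{G}_p)$ and write $S=R_p^{(1)}\triangle R_p^{(2)}$. If $R_p^{(2)}=\emptyset$, then $R_p\in\mathcal{R}_i(\mathcal{G})$ and we are done. Otherwise, Theorem~\ref{reverse} gives that $|R_p^{(2)}|$ is even, and Remark~\ref{cap12} gives $|R_p^{(1)}\cap R_p^{(2)}|\le 1$. I would then split into cases as follows.

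\emph{Case 1: $R_p^{(1)}\cap R_p^{(2)}=\emptyset$ and $S$ contains no nonempty dual support.} By the discussion preceding Theorem~\ref{reverse}, $S=R\in\mathcal{R}_i(\mathcal{G})$. Put $T=R_p^{(2)}$, a nonempty even subset of $R$. Then $R_p=\widetilde{R}_T\in\mathscr{K}_R$ by Theorem~\ref{thmR1}(\ref{thmr1}).

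\emph{Case 2: $R_p^{(1)}\cap R_p^{(2)}=\emptyset$ and $S$ contains a dual support $D$.} By Remark~\ref{rem:D-unique}, $D$ is unique, and $R:=S\setminus D\in\mathcal{R}_i(\mathcal{G})$ with $R\cap D=\emptyset$. Remark~\ref{2D} gives $|R_p^{(2)}\cap D|$ odd, and since $|R_p^{(2)}|$ is even, $|R_p^{(2)}\cap R|$ is odd as well. Setting $F_R=R_p^{(2)}\cap R$ and $F_D=R_p^{(2)}\cap D$ exhibits $R_p$ as the set $\mathcal{F}$ of Theorem~\ref{thmR1}(\ref{thmr3}). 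Uniqueness of $D$ is exactly the hypothesis of that part. Here I would regard $\mathcal{F}$ as part of $\mathscr{K}_R$. There is a subtlety: $R=\emptyset$ is impossible, since $D$ alone sums to zero. If $i\in D$, I would instead relabel so that the set matches $\mathscr{N}$ or $\mathscr{M}$; this needs checking.

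\emph{Case 3: $R_p^{(1)}\cap R_p^{(2)}=\{m\}$.} Lemma~\ref{nosupport} gives $S\in\mathcal{R}_i(\mathcal{G})$. Now use that $\mathcal{G}$ is systematic, so $\{i\}$ is itself a minimal recovery set. If $m=i$, then $\{i\}\subseteq R_p$ already recovers $x_i$, so $R_p=\{i\}$, which contradicts $R_p^{(2)}\neq\emptyset$. Hence $m\neq i$.

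The pair $\{m,m+n\}$ contributes exactly $x_{k+1}$. Therefore $R_p\setminus\{m,m+n\}$ recovers $x_i+x_{k+1}$, and its second-half part $R_p^{(2)}\setminus\{m\}$ has odd size. I would subdivide according to whether $i+n\in R_p$.

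\emph{Case 3a: $i+n\in R_p$.} Then $x_i$ cancels against $c_{i+n}=x_i+x_{k+1}$. So $R_p\setminus\{m,m+n,i+n\}$ has even second part and first$\triangle$second equal to a zero-sum set $D'$ not containing $i$. If $D'=\emptyset$, minimality forces $R_p=\{m,m+n,i+n\}\in\mathscr{L}$. Otherwise, I expect $R_p\setminus\{m,m+n\}$ itself to be of the form $\mathcal{D}_J\in\mathscr{M}$, and minimality should then exclude the extra pair $\{m,m+n\}$. This needs care, and it may reveal that the pair $\{m,m+n\}$ appears only when $D'=\emptyset$.

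\emph{Case 3b: $i+n\notin R_p$.} Then $S\cup\{m\}$ contains $i$ and is a dual-support-type set, with a pair $\{m,m+n\}$ attached. I would identify $D=S\triangle\{i\}$-type sets and $\ell=m$, matching $\mathcal{D}_E\in\mathscr{N}$ of Theorem~\ref{thmR7}. Minimality of $R_p$ would be used to verify that $D$ is an inclusion-minimal support and that the condition on $\ell$ holds.

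\textbf{Main obstacle.} Case 3, the bookkeeping needed to land precisely in $\mathscr{L}$, $\mathscr{M}$ or $\mathscr{N}$. The difficulty is showing that the zero-sum set arising is an inclusion-minimal support, and that the side conditions of Theorems~\ref{thmR6} and \ref{thmR7} hold. I would first try to derive each side condition by contradiction: any violating smaller set should produce a proper subset of $R_p$ recovering $x_i$, in the style of Remark~\ref{2D}.
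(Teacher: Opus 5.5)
Your strategy matches the paper's: decompose $R_p$ into $R_p^{(1)},R_p^{(2)}$, invoke Remarks~\ref{cap12}, \ref{2D}, \ref{rem:D-unique} and Lemma~\ref{nosupport}, and match $R_p$ to a listed family. Cases~1 and~2 are complete. In Case~2 you send everything through Part~\eqref{thmr3} of Theorem~\ref{thmR1}. The paper instead peels off $R_p^{(1)}=\emptyset$ (Corollary~\ref{colR13}) and $S\setminus D=\{i\}$ (Theorem~\ref{thmR6}). Your uniform version is legitimate and lighter: for $R=\{i\}$, $F_R=\{i\}$ the set $\mathcal{F}$ is exactly $\mathcal{D}_{F_D}$, and only uniqueness of $D$ is needed, not the extra hypotheses of Theorem~\ref{thmR6}. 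Your worry about $i\in D$ is unnecessary. That part of Theorem~\ref{thmR1} imposes no condition on $i$. Moreover, $i\in D$ cannot occur: it would give $i\in R_p^{(2)}$, and then $(R_p^{(1)}\cap D)\cup\bigl(((R_p^{(2)}\cap D)\setminus\{i\})+n\bigr)$ would be a proper subset of $R_p$ recovering $x_i$, since its second-half part is even by Remark~\ref{2D}.

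Case~3 is not yet a proof, and the expectation in 3a is mistaken. $R_p\setminus\{m,m+n\}$ cannot be some $\mathcal{D}_J\in\mathscr{M}$, because, as you computed yourself, it recovers $x_i+x_{k+1}$ rather than $x_i$. The missing observation is the one the paper uses. By Lemma~\ref{nosupport}, $S$ is a \emph{minimal} recovery set, and $\{i\}$ is a recovery set, so $i\in S$ forces $S=\{i\}$. Hence $D'=\emptyset$ and $R_p=\{m,m+n,i+n\}\in\mathscr{L}$.

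In 3b, the zero-sum set is $S\cup\{i\}$, not $S\cup\{m\}$; the latter does not contain $i$ and sums to $\mathbf{e}_i+\mathbf{g}_m$. Take $D=S\cup\{i\}$, $E=R_p^{(2)}\setminus\{m\}$ (odd) and $\ell=m\notin D$. Then $\mathcal{D}_E=R_p\in\mathscr{N}$, as in the paper. You still need the hypotheses of Theorem~\ref{thmR7}, which you only announce; the paper's proof also applies that theorem without checking them. Both follow from minimality.

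\emph{Inclusion-minimality of $D$.} Let $D''\subsetneq D$ be a nonempty zero-sum set. If $i\in D''$, then $D''\setminus\{i\}$ is a proper subset of $S$ recovering $x_i$. If $i\notin D''$, then $S\setminus D''$ is. Either way this contradicts minimality of $S$.

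\emph{The condition on $\ell$.} Let $R$ be a recovery set with $m\in R\subsetneq D\cup\{m\}$. Put $W=R\setminus\{m\}$ if $i\notin R$, and $W=S\setminus R$ if $i\in R$. In both cases $W\subseteq S$ and $\sum_{j\in W}\mathbf{g}_j+\mathbf{g}_m=\mathbf{e}_i$. Set $A=W\cap R_p^{(1)}$ and $B=W\cap R_p^{(2)}$. Then one of $A\cup(B+n)\cup\{m\}$ and $A\cup(B+n)\cup\{m+n\}$ has an even second-half count. That set is a proper subset of $R_p$ recovering $x_i$, contradicting minimality of $R_p$.
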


\begin{proof}
It has already been established that
\[
\mathcal{R}_i(\mathcal{G})
\cup \mathscr{L}\cup \mathscr{M}\cup \mathscr{N}
\cup
\bigcup_{R\in\mathcal{R}_i(\mathcal{G})}\mathscr{K}_R
\subseteq
\mathcal{R}_i(\mathcal{G}_p).
\]
It remains to prove the reverse inclusion. Let
$R_p\in\mathcal{R}_i(\mathcal{G}_p)$, and define
\[
R_p^{(1)}=R_p\cap[n],
\qquad
R_p^{(2)}
=
\{j-n:j\in R_p\cap[n+1,2n]\}.
\]
Then $R_p^{(1)}\triangle R_p^{(2)}$ is a recovery set for $x_i$, and $|R_p^{(2)}|$ is even. Moreover, if there exists
$\mathbf{h}\in\mathcal{C}^{\perp}$ such that
\[
D=\operatorname{supp}(\mathbf{h})
\subsetneq R_p^{(1)}\triangle R_p^{(2)},
\]
then $\bigl(R_p^{(1)}\triangle R_p^{(2)}\bigr)\setminus D
\in\mathcal{R}_i(\mathcal{G}).$ 

The argument splits into two parts according to whether $R_p^{(1)}=\emptyset$. We treat the two possibilities separately.

First, suppose that $R_p^{(1)}=\emptyset$. Then $R_p=R_p^{(2)}+n.$ For this case, we distinguish two further subcases.

\begin{enumerate}[(I)]
    \item There is no set $D\subsetneq R_p^{(2)}$ such that
$D=\operatorname{supp}(\mathbf{h})$ for some
$\mathbf{h}\in\mathcal{C}^{\perp}$. Then
$R_p^{(2)}\in\mathcal{R}_i(\mathcal{G}).$
By Corollary~\ref{colR11} of Theorem~\ref{thmR1},
\[
R_p^{(2)}+n
=
R_p
\in
\mathscr{K}_{R_p^{(2)}}.
\]
\item
There exists
$D\subsetneq R_p^{(2)}$
such that
$D=\operatorname{supp}(\mathbf{h})$ for some
$\mathbf{h}\in\mathcal{C}^{\perp}$. Then
$R_p^{(2)}\setminus D\in\mathcal{R}_i(\mathcal{G})$,
and, since $|R_p^{(2)}|$ is even and $|D|$ is odd, $|R_p^{(2)}\setminus D|$
is odd. Two subcases arise.

\begin{enumerate}[(i)]
    \item Suppose
$
R_p^{(2)}\setminus D=\{i\}.
$
Then
$
R_p^{(2)}=D\cup\{i\}
$. Taking $J=D$ in Theorem~\ref{thmR6}, we obtain
\[
\mathcal{D}_J
=
(D+n)\cup\{i+n\}
=
R_p^{(2)}+n
=
R_p
\in\mathscr{M}.
\]

\item Suppose
$
R_p^{(2)}\setminus D\neq\{i\}.
$
Since both $|R_p^{(2)}\setminus D|$ and $|D|$ are odd, Corollary~\ref{colR13} yields
\[
\bigl((R_p^{(2)}\setminus D)\cup D\bigr)+n
=
R_p^{(2)}+n
=
R_p
\in
\mathscr{K}_{R_p^{(2)}\setminus D}.
\]
\end{enumerate}
\end{enumerate}
Now suppose that $R_p^{(1)}\neq\emptyset$.
In this part as well, we distinguish two subcases.

\begin{enumerate}[(I)]
    \item There is no set
$
D\subsetneq R_p^{(1)}\triangle R_p^{(2)}
$
such that
$D=\operatorname{supp}(\mathbf{h})$ for some
$\mathbf{h}\in\mathcal{C}^{\perp}$. This implies that
\[
R_p^{(1)}\triangle R_p^{(2)}
\in\mathcal{R}_i(\mathcal{G}).
\]

If
$
R_p^{(1)}\cap R_p^{(2)}=\emptyset,
$
then
$
R_p^{(1)}\triangle R_p^{(2)}
=
R_p^{(1)}\cup R_p^{(2)}
\in\mathcal{R}_i(\mathcal{G}).
$
Since $|R_p^{(2)}|$ is even, Part~\eqref{thmr1} of Theorem~\ref{thmR1} gives
\[
R_p^{(1)}\cup(R_p^{(2)}+n)
=
R_p
=
\widetilde{R}_{R_p^{(2)}}
\in\mathscr{K}_{R_p^{(2)}}.
\]

Next, suppose that
$
R_p^{(1)}\cap R_p^{(2)}=\{m\}.
$
There are again two possibilities.

\begin{enumerate}[(i)]
    \item Suppose
$
R_p^{(1)}\triangle R_p^{(2)}=\{i\}.
$
Then
$
R_p^{(1)}\setminus R_p^{(2)}=\emptyset,$ and $
R_p^{(2)}\setminus R_p^{(1)}=\{i\}.
$
Consequently,
\[
R_p^{(2)}
=
R_p^{(1)}\cup\{i\}
=
\{m,i\}.
\]
Therefore,
\[
R_p
=
R_p^{(1)}\cup(R_p^{(2)}+n)
=
\{m,m+n,i+n\}
\in\mathscr{L}.
\]

\item  Suppose
$
R_p^{(1)}\triangle R_p^{(2)}\neq\{i\}.
$
Then
$
D'
=
\bigl(R_p^{(1)}\triangle R_p^{(2)}\bigr)\cup\{i\}
$
is the support of a codeword of $\mathcal{C}^{\perp}$. Note that
$
i\in D'
\text{ and }
m\notin D'.
$
Since $|R_p^{(2)}\setminus R_p^{(1)}|$ is odd, taking
\[
E=R_p^{(2)}\setminus R_p^{(1)}
\qquad\text{and}\qquad
\ell=m
\]
in Theorem~\ref{thmR7} gives
\[
\begin{aligned}
\mathcal{D}_E
&=
\bigl(R_p^{(1)}\setminus R_p^{(2)}\bigr)
\cup
\bigl((R_p^{(2)}\setminus R_p^{(1)})+n\bigr)
\cup\{m,m+n\}\\
&=
R_p^{(1)}\cup(R_p^{(2)}+n)
=
R_p
\in\mathscr{N}.
\end{aligned}
\]

\end{enumerate} 

\item
There exists
$
D\subsetneq R_p^{(1)}\triangle R_p^{(2)}
$
such that
$D=\operatorname{supp}(\mathbf{h})$ for some
$\mathbf{h}\in\mathcal{C}^{\perp}$. By Lemma~\ref{nosupport},
$
R_p^{(1)}\cap R_p^{(2)}=\emptyset.
$
Consequently,
\[
\bigl(R_p^{(1)}\triangle R_p^{(2)}\bigr)\setminus D
=
(R_p^{(1)}\setminus D)\cup(R_p^{(2)}\setminus D)
\in\mathcal{R}_i(\mathcal{G}).
\]
Now consider two subcases.

\begin{enumerate}[(i)]
    \item Suppose
$
(R_p^{(1)}\setminus D)\cup(R_p^{(2)}\setminus D)=\{i\}.
$
Since $i\notin R_p^{(1)}$, we have
$
R_p^{(1)}\setminus D=\emptyset,
$ and $
R_p^{(2)}\setminus D=\{i\}.
$
Hence
\[
R_p^{(1)}\subseteq D
\qquad\text{and}\qquad
R_p^{(2)}=(R_p^{(2)}\cap D)\cup\{i\}.
\]
Let
$
J=R_p^{(2)}\cap D.
$
Since $|R_p^{(2)}|$ is even, $|J|$ is odd. Applying Theorem~\ref{thmR6}, we obtain
\[
\begin{aligned}
\mathcal{D}_J
&=
R_p^{(1)}
\cup
\bigl((R_p^{(2)}\cap D)+n\bigr)
\cup\{i+n\}\\
&=
R_p^{(1)}\cup(R_p^{(2)}+n)
=
R_p
\in\mathscr{M}.
\end{aligned}
\]

\item Suppose
$
(R_p^{(1)}\setminus D)\cup(R_p^{(2)}\setminus D)
\neq\{i\}.
$
Let
\[
F_D=R_p^{(2)}\cap D,
\qquad
F_R=R_p^{(2)}\setminus D.
\]
Since $|R_p^{(2)}|$ is even, $|F_D|$ and $|F_R|$ have the same parity. By the preceding result, both are odd. Applying Part~\eqref{thmr3} of Theorem~\ref{thmR1}, we obtain
\[
\begin{aligned}
\mathcal{F}
&=
(R_p^{(1)}\setminus D)
\cup(R_p^{(1)}\cap D)\cup
\bigl((R_p^{(2)}\setminus D)
\cup(R_p^{(2)}\cap D)\bigr)+n\\
&=
R_p^{(1)}\cup(R_p^{(2)}+n)
=
R_p\in
\mathscr{K}_{(R_p^{(1)}\setminus D)\cup
(R_p^{(2)}\setminus D)}.
\end{aligned}
\]

\end{enumerate} 

\end{enumerate} 
This completes the proof.
\end{proof}

\begin{remark}
    It is important to note that, in Theorems \ref{thmR4} to \ref{thmR8}, systematicity is used only to establish the single fact $c_i = x_i$ for the particular index $i$ under consideration. The hypothesis that $\mathcal{G}$ is a systematic generator matrix is therefore stronger than necessary: the theorems also hold if $\mathcal G$ has a systematic column for $x_i$.
\end{remark}

\begin{example}\label{exR1}
 Consider a $[6,3]$ binary linear code $\mathcal{C}$ with a generator matrix as \[\mathcal{G}=\begin{pmatrix}
        1 &0&0&1&0&1\\
        0&1&0&1&1&0\\
        0&0&1&0&1&1\\
    \end{pmatrix}.\]     The minimal recovery sets of $x_1, x_2$, and $x_3$ are $$\mathcal{R}_1(\mathcal{G})=\{(1),(2,4),(3,6),(2, 5, 6),(3,4,5)\},$$
    \[
\mathcal{R}_2(\mathcal{G})=\{(2),(1,4),(3,5),(1, 5, 6),(3, 4, 6)\}, \text{ and }\] \[\mathcal{R}_3(\mathcal{G})=\{(3),(1, 6), (2, 5), (1, 4, 5),(2, 4, 6)\},
\]
respectively. Now consider \(\mathcal{C}_p\).
By Part \eqref{thmr2} of Theorem \ref{thmR1}, we have 
\[\{\{1,7\},\{2,8\},\{3,9\},\{4,10\},\{5,11\},\{6,12\}
\}\subset\mathcal{R}_4(\mathcal{G}_p).\]

For the information symbol $x_1$, we obtain a set $\mathscr K$ by Theorem \ref{thmR1} (the first two sets by Corollary \ref{colR11}), a set $\mathscr L$ by Theorem \ref{thmR4}, a set $\mathscr M$ by Theorem \ref{thmR6}, and a set $\mathscr N$ by Theorem \ref{thmR7} as follows:
\[\begin{aligned}
\mathscr K=\{&
\{8,10\},\{9,12\},\{2,11,12\},\{5,8,12\},\{6,8,11\},\{3,10,11\},\{4,9,11\},\{5,9,10\}\}
\\
\mathscr L=\{&\{2,7,8\},
\{3,7,9\},\{4,7,10\},\{5,7,11\},
\{6,7,12\}
\}\\
\mathscr M=\{&\{2,3,7,11\},\{2,5,7,9\},\{3,5,7,8\},\{7,8,9,11\},\{4,5,7,12\},\{4,6,7,11\},\{5,6,7,10\},\{7,10,11,12\}\},\\
\mathscr N=\{&\{2,3,8,12\},\{2,3,9,10\},
\{4,6,8,12\},\{2,5,10,11\},\{2,6,8,9\},\{2,6,10,12\},
\{3,4,8,9\},\{3,4,10,12\},\\&\{3,5,11,12\},\{4,6,9,10\},\{4,5,8,11\},\{5,6,9,11\}\}
\end{aligned}.
\]

Additionally, we have
$\mathcal{R}_1(\mathcal{G}_p)=\mathcal{R}_1(\mathcal{G})\cup \mathscr K\cup \mathscr L\cup \mathscr M\cup \mathscr N.$ The same conclusion holds for $i=2$ and $i=3,$ i.e., $\mathcal{R}_i(\mathcal{G}_p)$ can be obtained by the same method for $i=2,$ and $3$.
\end{example}

\subsection{Parameters of the Iterated Plotkin Construction}\label{plotkincpm}

We now determine the parameters of the codes obtained by recursively applying the Plotkin construction. Let
$\mathcal{C}_{p^0}=\mathcal{C}, \text{ and } \mathcal{G}_{p^0}=\mathcal{G}$,
where $\mathcal{C}$ is an $[n,k,d]$ binary linear code with a generator matrix $\mathcal{G}$. For $m\geq 1$, define
\begin{equation}\label{Cp^m}
    \mathcal{C}_{p^{m}}=\{(\mathbf{c},\mathbf{c})\mid \mathbf c\in \mathcal{C}_{p^{m-1}}\}\cup\{(\mathbf{c},\mathbf{1}_n+\mathbf{c})\mid \mathbf c\in \mathcal{C}_{p^{m-1}}\},
\end{equation} and \[\mathcal{G}_{p^{m}}=\begin{pmatrix}
    \mathcal{G}_{p^{m-1}} &\mathcal{G}_{p^{m-1}}\\\mathbf{0}_n &\mathbf{1}_n
\end{pmatrix}.\]
Since the Plotkin construction maps an $[n,k,d]$ code to an $[2n,k+1,\min\{n,2d\}]$ code, it can be verified by induction on $m$ that  $\mathcal{C}_{p^m}$ is an $\left[2^{m} n,\;k+m,\;\min\left\{2^{m-1}n,\;2^{m} d\right\}\right]$ linear code with  a generator matrix $\mathcal{G}_{p^{m}}$.





\begin{theorem}\label{thmI1}
   Let $\mathcal{C}$ be an $[n,k,d]$ linear code over $\mathbb{F}_2$ and let $\mathcal{G}$ be a generator matrix of $\mathcal{C}$. Suppose that for some $i\in[k]$, there exists a systematic node in $\mathcal{G}$ and  $t$ recovery sets of cardinality $3$ in $\mathcal{R}_i(\mathcal{G})$ such that $j\in[n]$  is in $t_j$ of these cardinality $3$ recovery sets. Let $m$ be a positive integer and consider a code $\mathcal C_{p^m}$, iteratively obtained from $\mathcal{C}$ (as in \eqref{Cp^m}). Then $\mathcal C_{p^m}$ has at least
$4^m t + 2n4^{m-1}-n2^{m-1} -\frac{4^m-1}{3}$ recovery sets of cardinality $3$ in $\mathcal{R}_i(\mathcal{G}_{p^{m}})$ such that the nodes $j$ and $j+n$ is contained in $2^mt_j+2^m-1$ of these recovery sets.
\end{theorem}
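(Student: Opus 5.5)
The plan is induction on $m$, with Corollary~\ref{colR41} as the single-step engine. By the remark following Theorem~\ref{thmR8}, Theorems~\ref{thmR4}--\ref{thmR8}, and hence Corollary~\ref{colR41}, only need a systematic column for $x_i$ rather than a fully systematic $\mathcal G$. So the first task is to check that this hypothesis survives each Plotkin step. If column $i$ of $\mathcal G_{p^{s-1}}$ equals $\mathbf e_i$, then column $i$ of $\mathcal G_{p^{s}}$ is $(\mathbf e_i,0)^\top$, again systematic for $x_i$. The standing assumptions also persist: $\mathcal G_{p^s}$ has no zero column, and it has no repeated systematic column for $x_i$, since every column in the right half has last entry $1$. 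Hence Corollary~\ref{colR41} may be applied to $\mathcal G_{p^{s-1}}$ for every $s\ge 1$.

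The invariant I will carry is the following. Let $n_s=2^s n$ be the length at level $s$, let $T_s$ be a lower bound on the number of cardinality-$3$ sets in $\mathcal R_i(\mathcal G_{p^s})$, and for a node $u\in[n_s]$ let $a_s(u)$ be its multiplicity among these sets. Start with $T_0=t$ and $a_0(j)=t_j$.

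One application of Corollary~\ref{colR41} at level $s-1$ does three things.
\begin{itemize}
\item It keeps the old $T_{s-1}$ sets, which remain minimal recovery sets since $\mathcal R_i(\mathcal G_{p^{s-1}})\subseteq\mathcal R_i(\mathcal G_{p^s})$.
\item It adds at least $3T_{s-1}+n_{s-1}-1$ new sets.
\item It gives, for $u\in[n_{s-1}]$, the multiplicities $a_s(u)=a_{s-1}(u)+(a_{s-1}(u)+1)=2a_{s-1}(u)+1$ and $a_s(u+n_{s-1})=2a_{s-1}(u)+1$.
\end{itemize}
This yields the recursions
\[
T_s=4T_{s-1}+2^{s-1}n-1,\qquad a_s=2a_{s-1}+1 .
\]
Both halves of the new code obey the same recursion, so every node of $\mathcal C_{p^s}$ descending from $j$ (that is, $j$ and all its translates by the $n_r$) shares one value. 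This is what makes the statement about ``nodes $j$ and $j+n$'' (and their higher translates) consistent.

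Solving the recursions is routine. Unrolling gives
\[
T_m=4^m t+\sum_{s=1}^{m}4^{m-s}\bigl(2^{s-1}n-1\bigr).
\]
The first part of the sum is $n\sum_{s=1}^m 2^{2m-s-1}=n\,2^{m-1}(2^m-1)=2n4^{m-1}-n2^{m-1}$, and the second is $-\sum_{s=1}^m 4^{m-s}=-(4^m-1)/3$. This matches the claimed count. Likewise $a_m=2^m t_j+2^m-1$ follows from $a_s+1=2(a_{s-1}+1)$.

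I expect the main obstacle to be the bookkeeping rather than any new idea. First, I must confirm that Corollary~\ref{colR41}'s sets are genuinely new and pairwise distinct at each level. The $n_{s-1}-1$ sets $\{j,j+n_{s-1},i+n_{s-1}\}$ from Theorem~\ref{thmR4} contain a node of the right half, and the three sets $\widetilde R_T$ from Theorem~\ref{thmR1}(i), obtained from each old triple $R$ with $|T|=2$, contain exactly two right-half nodes, so they are distinct from each other and from the old sets. Second, I must verify that the multiplicity contributions of Corollary~\ref{colR41} really add to $2a+1$ for both $u$ and $u+n_{s-1}$. I would re-derive this directly: node $u$ lies in the $a$ old triples, in $a$ of the $\widetilde R_T$ (one per old triple containing $u$), and in one set from Theorem~\ref{thmR4}. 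Node $u+n_{s-1}$ lies in $2a$ of the $\widetilde R_T$ and in one set from Theorem~\ref{thmR4}. The case $u=i$ needs separate care, and I would handle that exception explicitly rather than rely on the corollary's phrasing.
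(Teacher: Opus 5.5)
Your proposal is correct and is essentially the argument the paper intends: the paper states Theorem~\ref{thmI1} without proof, as the iteration of Corollary~\ref{colR41}, which is licensed by the remark that only a systematic column for $x_i$ is needed. Your recursions $T_s=4T_{s-1}+2^{s-1}n-1$ and $a_s=2a_{s-1}+1$, and their solutions, check out.

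The one point to sharpen is that the $u=i$ case cannot be made to fit the formula. Node $i$ lies in no triple at any level, and every other copy of $i$ lies in $2^{m-1}n-1$ of them. So the multiplicity claim holds only for $j\neq i$, which is how the paper tacitly uses it in the Reed--Muller application. The count of triples is unaffected.
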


\section{Allocations for service rates in \texorpdfstring{$\mathcal{C}_p$}{Cp} and \texorpdfstring{$\mathcal{C}_{p^m}$}{Cp^m}}\label{sec4}
In this section, we determine the maximal achievable service rates in service polytope $\Lambda(\mathcal{G}_p)$, in terms of the maximal achievable service rates in service polytope $\Lambda(\mathcal{G})$. Let $\boldsymbol{\hat{\lambda}}=(\hat{\lambda}_1, \hat{\lambda}_2,\ldots,\hat{\lambda}_k, \hat{\lambda}_{k+1})$ denote the achievable service request vector in $\Lambda(\mathcal{G}_p)$. Since $\mathcal{R}_i(\mathcal{G})\subseteq\mathcal{R}_i(\mathcal{G}_p),$ for every $i\in[k]$, we have $\Lambda(\mathcal{G})\subseteq\Lambda(\mathcal{G}_p)$.

Now, we first analyze the partial hypergraph $\Gamma_{\mathcal{G}_p}^{k+1}$, for the additional information symbol ${k+1}$ present in $\mathcal{C}_p$, i.e., $I=\{k+1\}$.
We construct explicit matching and vertex cover for $\Gamma_{\mathcal{G}_p}^{k+1}$. By Part \eqref{thmr2} of Theorem \ref{thmR1}, there are exactly $n$ disjoint hyperedges (recovery sets) of size $2$ labeled $\mathbf e_{k+1}$ in $\Gamma_{\mathcal{G}_p}^{k+1}$. In other words, $n\leq \nu(\Gamma_{\mathcal{G}_p}^{{k+1}})$. Now consider the following assignment:
$$\beta_v=\begin{cases}
   1 \quad;&  \text{if }v\in[n+1, 2n] \\0 \quad ;& \text{if } v\in[n].
\end{cases} $$
Since the first $n$ columns of $\mathcal G_p$ all have last coordinate zero, $\mathbf e_{k+1}$ is not in their span. It follows that any $R\in \mathcal{R}_{k+1}(\mathcal{G}_p)$ contains at least a node (vertex) $v\in[n+1,2n]$, therefore, the set $[n+1,2n]$ is a valid vertex cover for $\Gamma_{\mathcal{G}_p}^{{k+1}}$. It is easy to confirm that $\tau(\Gamma_{\mathcal{G}_p}^{{k+1}})=n.$ Hence, by \eqref{inequalities}, 
\begin{equation}
\label{new}n=\nu(\Gamma_{\mathcal{G}_p}^{k+1})=\hat{\lambda}_{k+1}^{*}=\tau(\Gamma_{\mathcal{G}_p}^{{k+1}}).\end{equation}

Let $i\in[k]$. Now, if there are $t$ disjoint recovery sets of even cardinality in $\mathcal{R}_i(\mathcal{G})$, we have $t\leq\nu(\Gamma_{\mathcal{G}}^{i})\leq\lambda_i^*.$ Then by Corollary \ref{colR12}, we have $2t\leq\nu(\Gamma_{\mathcal{G}_p}^{i})\leq\hat{\lambda}_i^*.$ If $\mathcal G$ is systematic then $2t+1\leq\hat{\lambda}_i^*.$ 

On the other hand, for the upper bound on service rate, we focus on the vertex cover number for $\Gamma_{\mathcal{G}_p}^{i}$. Let $V$ be a vertex cover for $\Gamma_{\mathcal{G}}^{i}$ such that $|V|=\tau(\Gamma_{\mathcal{G}}^{i})$. Define $V_p=V\cup (V+n).$ We show that $V_p$ forms a valid vertex cover for $\Gamma_{\mathcal{G}_p}^{i}.$ Since each recovery set in $\mathcal{R}_i(\mathcal{G}_p)$ induces
a recovery set in $\mathcal{R}_i(\mathcal{G})$ (refer to Theorem \ref{reverse}), a vertex cover of
$\Gamma_{\mathcal{G}_p}^{i}$ can be constructed by extending a vertex cover
of $\Gamma_{\mathcal{G}}^{i}$.
For any $R_p\in\mathcal{R}_i(\mathcal{G}_p)$, there exists an $R\in\mathcal{R}_i(\mathcal{G})$ such that $R\subseteq R_p^{(1)}\triangle R_p^{(2)}$. Choose $z\in R\cap V$.
Since $\big(R_p^{(1)}\cup  (R_p^{(2)}+n )\big)=R_p,$ either $z\in R_p$ or $z+n\in R_p$. In other words, the recovery set $R_p$ is covered by $V_p.$
  Therefore, \[\hat{\lambda}_i^*\leq 2|V| \quad \text{ for all } i\in[k].\]

This idea is generalized for any subset $I\subseteq[k]$, and following the similar steps, it can be verified that for a vertex cover $V$ of $\Gamma_{\mathcal{G}}(I)$, $V_p=V\cup (V+n)$ is a vertex cover of $\Gamma_{\mathcal{G}_p}(I)$. Moreover, it directly follows from the definition of matching that $\nu(\Gamma_{\mathcal{G}}(I))\leq\nu(\Gamma_{\mathcal{G}_p}(I)).$
\begin{theorem}\label{VC1}
    Let $\mathcal{C}$ be an $[n,k,d]$ binary linear code with a generator matrix $\mathcal{G}$. Let $I\subseteq[k]$, and $V$ be a optimal vertex cover for $\Gamma_{\mathcal{G}}(I)$. Then $V_p=V\cup (V+n)$ is a vertex cover of $\Gamma_{\mathcal{G}_p}(I)$, and \[\sum_{i\in I}\hat{\lambda}_i\leq \tau(\Gamma_{\mathcal{G}_p}(I))\leq 2\tau(\Gamma_\mathcal{G}(I)) =2|V|.\]
\end{theorem}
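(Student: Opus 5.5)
The plan is to prove the vertex-cover statement first and then chain it with the LP duality relations in \eqref{inequalities}.

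\textbf{Step 1: $V_p$ is a vertex cover.} Let $E$ be any hyperedge of $\Gamma_{\mathcal{G}_p}(I)$. Then $E=R_p\in\mathcal{R}_i(\mathcal{G}_p)$ for some $i\in I$.

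By Theorem~\ref{reverse}(ii) and the discussion preceding it, the set $R_p^{(1)}\triangle R_p^{(2)}$ recovers $x_i$ with respect to $\mathcal{G}$. Every recovery set contains a minimal one, so there is some $R\in\mathcal{R}_i(\mathcal{G})$ with $R\subseteq R_p^{(1)}\triangle R_p^{(2)}$.

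Because $i\in I$, this $R$ is a hyperedge of $\Gamma_{\mathcal{G}}(I)$. Hence it meets $V$; pick $z\in R\cap V$.

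Since $z$ lies in the symmetric difference, it belongs to exactly one of $R_p^{(1)}$ and $R_p^{(2)}$. This gives two cases:
\begin{itemize}
\item if $z\in R_p^{(1)}$, then $z\in R_p$;
\item if $z\in R_p^{(2)}$, then $z+n\in R_p$.
\end{itemize}
In either case $R_p\cap V_p\neq\emptyset$, so $V_p$ meets every hyperedge.

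\textbf{Degenerate case $R_p^{(2)}=\emptyset$.} Here $R_p=R_p^{(1)}$ is itself in $\mathcal{R}_i(\mathcal{G})$, so it meets $V$ directly. This case is already absorbed in the argument above.

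\textbf{Step 2: counting.} The sets $V\subseteq[n]$ and $V+n\subseteq[n+1,2n]$ are disjoint. Therefore
\[
|V_p|=2|V|=2\tau(\Gamma_{\mathcal{G}}(I)),
\]
using that $V$ is optimal. It follows that $\tau(\Gamma_{\mathcal{G}_p}(I))\le 2\tau(\Gamma_{\mathcal{G}}(I))$.

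\textbf{Step 3: the service-rate bound.} For any $\hat{\boldsymbol\lambda}\in\Lambda(\mathcal{G}_p)$, we have
\[
\sum_{i\in I}\hat\lambda_i\le\Bigl(\sum_{i\in I}\hat\lambda_i\Bigr)^*=\nu_f(\Gamma_{\mathcal{G}_p}(I))=\tau_f(\Gamma_{\mathcal{G}_p}(I))\le\tau(\Gamma_{\mathcal{G}_p}(I)).
\]
This is \eqref{inequalities} applied to the partial hypergraph, which the paper notes remains valid for $\Gamma_{\mathcal{G}}(I)$ with $I\subset[k]$. Strictly, one should check that the allocation for $\mathcal{G}_p$ can be restricted to the hyperedges labeled by $I$, which is immediate. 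Chaining this with Step 2 gives the claimed inequalities.

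\textbf{Main obstacle.} The only delicate point is Step 1, namely that every minimal recovery set of $\mathcal{G}_p$ for $x_i$ projects, via the symmetric difference, onto a set containing a member of $\mathcal{R}_i(\mathcal{G})$. This is exactly Theorem~\ref{reverse}(ii), together with the parity fact in Theorem~\ref{reverse}(i) that removes the $x_{k+1}$ contribution, so I will invoke it directly rather than reprove it.
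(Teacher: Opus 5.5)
Your proof is correct and follows the paper's own argument. You map $R_p$ to $R_p^{(1)}\triangle R_p^{(2)}$, pick a minimal $R\in\mathcal{R}_i(\mathcal{G})$ inside it, take $z\in R\cap V$, note that $z$ or $z+n$ lies in $R_p$, and then chain $|V_p|=2|V|$ with \eqref{inequalities} for the partial hypergraph. You also rely only on the symmetric difference being \emph{a} recovery set, which need not be minimal despite the loose phrasing of Theorem~\ref{reverse}(ii), before passing to a minimal subset; that is exactly the right level of care.
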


\begin{lemma}\label{}
 Let $I\subseteq[k]$, and let $\delta:[n]\to[0,1]$ be a fractional vertex cover of
$\Gamma_{\mathcal{G}}(I)$. Define $\delta_p:[2n]\to[0,1]$ by
\[
\delta_p(j)=\delta_p(j+n)=\delta(j),
\qquad j\in[n].
\]
Then $\delta_p$ is a fractional vertex cover of
$\Gamma_{\mathcal{G}_p}(I)$. Consequently, $\tau_f\bigl(\Gamma_{\mathcal{G}_p}(I)\bigr)
\leq
2\tau_f\bigl(\Gamma_{\mathcal{G}}(I)\bigr).$

\end{lemma}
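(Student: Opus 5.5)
The plan mirrors the integral vertex-cover argument preceding Theorem~\ref{VC1}, with the $0/1$ weights replaced by fractional ones. The key input is Theorem~\ref{reverse} together with the discussion preceding it. For every $i\in I$ and every $R_p\in\mathcal{R}_i(\mathcal{G}_p)$ there is some $R\in\mathcal{R}_i(\mathcal{G})$ with
\[
R\subseteq R_p^{(1)}\triangle R_p^{(2)},\qquad R_p^{(1)}=R_p\cap[n],\quad R_p^{(2)}=\{j-n:j\in R_p\cap[n+1,2n]\}.
\]
Since $R_p=R_p^{(1)}\cup(R_p^{(2)}+n)$, each $z\in R$ lies in $R_p^{(1)}$ or in $R_p^{(2)}$. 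Hence $z\in R_p$ or $z+n\in R_p$, and different $z\in R$ give different vertices of $R_p$. In other words, there is an injection $\phi:R\to R_p$ with $\phi(z)\in\{z,z+n\}$; we take $\phi(z)=z$ if $z\in R_p^{(1)}$ and $\phi(z)=z+n$ otherwise. Injectivity holds because $z\mapsto z$ and $z\mapsto z+n$ land in the disjoint ranges $[n]$ and $[n+1,2n]$.

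With this injection the covering inequality follows directly. Since $\delta_p\ge 0$ and $\delta_p(\phi(z))=\delta(z)$,
\[
\sum_{v\in R_p}\delta_p(v)\;\ge\;\sum_{z\in R}\delta_p(\phi(z))\;=\;\sum_{z\in R}\delta(z)\;\ge\;1 .
\]
The last step holds because $R$ is a hyperedge of $\Gamma_{\mathcal{G}}(I)$ and $\delta$ is a fractional vertex cover of it. The hyperedges of $\Gamma_{\mathcal{G}_p}(I)$ are exactly the sets in $\mathcal{R}_i(\mathcal{G}_p)$ for $i\in I$, so every such hyperedge is covered. The values of $\delta_p$ lie in $[0,1]$ because those of $\delta$ do. Thus $\delta_p$ is a fractional vertex cover.

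For the consequence, we take $\delta$ to be an optimal fractional vertex cover of $\Gamma_{\mathcal{G}}(I)$. Then $\sum_{v\in[2n]}\delta_p(v)=2\sum_{j\in[n]}\delta(j)=2\tau_f(\Gamma_{\mathcal{G}}(I))$. Minimizing over fractional covers of $\Gamma_{\mathcal{G}_p}(I)$ then gives $\tau_f(\Gamma_{\mathcal{G}_p}(I))\le 2\tau_f(\Gamma_{\mathcal{G}}(I))$.

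The only delicate point is justifying that $R_p^{(1)}\triangle R_p^{(2)}$ contains a minimal recovery set of $\mathcal{G}$, including the case where the symmetric difference itself contains a dual support $D$. This is exactly what Theorem~\ref{reverse}(ii) and the analysis before it (the set $(R_p^{(1)}\triangle R_p^{(2)})\setminus D$) provide, so I will simply invoke it. I should note that any recovery set contains a minimal one, so even the weaker statement that $R_p^{(1)}\triangle R_p^{(2)}$ recovers $x_i$ suffices. That weaker statement follows from the parity computation showing $|R_p^{(2)}|$ is even.
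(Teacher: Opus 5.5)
Your proof is correct and follows essentially the same route as the paper. Both arguments invoke Theorem~\ref{reverse} to obtain some $R\in\mathcal{R}_i(\mathcal{G})$ with $R\subseteq R_p^{(1)}\triangle R_p^{(2)}$, use that each $z\in R$ contributes exactly one of $z$, $z+n$ to $R_p$, and bound $\sum_{v\in R_p}\delta_p(v)$ below by $\sum_{z\in R}\delta(z)\geq 1$; your explicit injection $\phi$, and your remark that only the parity fact (making $R_p^{(1)}\triangle R_p^{(2)}$ a recovery set) is needed, make the argument slightly cleaner than the paper's, whose extra remark about a systematic node is redundant.
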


\begin{proof}
Let $R_p\in\mathcal{R}_i(\mathcal{G}_p)$ be induced by
$R\in\mathcal{R}_i(\mathcal{G})$ for some $i\in I$. Since $R\subseteq R_p^{(1)}\triangle R_p^{(2)}$, for each $j\in R$,
exactly one of $j$ and $j+n$ belongs to $R_p$. Hence,
\[
\sum_{v\in R_p} \delta_p(v)
=\sum_{j\in R_p\cap[n]}
\delta_p(j)+\sum_{j+n\in R_p\cap[2n]}\delta_p(j)
\geq
\sum_{j\in R}\delta(j)
\geq 1.
\]
If there exists a systematic node for the $i$-th information object in $\mathcal{G}$, then $\delta(i)=1$. 
Then, for all $R_p\in\mathcal{R}_i(\mathcal{G}_p)$ with $i\in R_p$ or $i+n\in R_p$, we have $\sum\limits_{v\in R_p} \delta_p(v)\geq 1.$
Thus, every hyperedge in
$\Gamma_{\mathcal{G}_p}(I)$ is fractionally covered by $\delta_p$. Moreover,
\[
\sum_{v\in[2n]}\delta_p(v)
=
\sum_{j\in[n]}
\bigl(\delta_p(j)+\delta_p(j+n)\bigr)
=
2\sum_{j\in[n]}\delta(j).
\]
Taking $\delta$ to be an optimal fractional vertex cover of
$\Gamma_{\mathcal{G}}(I)$ yields $\tau_f\bigl(\Gamma_{\mathcal{G}_p}(I)\bigr)
\leq
2\tau_f\bigl(\Gamma_{\mathcal{G}}(I)\bigr)$.
\end{proof}
\begin{theorem}
    Let $\mathcal{C}$ be an $[n,k,d]$ binary linear code with a generator matrix $\mathcal{G}$. Let $I\subseteq[k]$ and let $\boldsymbol{\hat{\lambda}}=(\hat{\lambda}_1, \hat{\lambda}_2,\ldots,\hat{\lambda}_k, \hat{\lambda}_{k+1})\in \Lambda(\mathcal{G}_p)$. Then, $$\sum_{i\in I}\hat{\lambda}_i\leq\left(\sum_{i\in I}\hat\lambda_i\right)^* \le 2\left(\sum_{i\in I}\lambda_i\right)^*=2\tau_f(\Gamma_\mathcal{G}(I)).$$ Moreover, the second inequality bound is attained.
\end{theorem}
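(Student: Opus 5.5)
The first inequality holds by the definition of $(\sum_{i\in I}\hat\lambda_i)^*$ as a maximum over $\Lambda(\mathcal{G}_p)$. For the second I will chain the LP-duality identity \eqref{inequalities} for partial recovery hypergraphs with the preceding lemma on doubled fractional covers. By \eqref{inequalities} applied to $\Gamma_{\mathcal{G}_p}(I)$,
\[
\Bigl(\sum_{i\in I}\hat\lambda_i\Bigr)^*=\nu_f\bigl(\Gamma_{\mathcal{G}_p}(I)\bigr)=\tau_f\bigl(\Gamma_{\mathcal{G}_p}(I)\bigr).
\]
The preceding lemma gives $\tau_f(\Gamma_{\mathcal{G}_p}(I))\le 2\tau_f(\Gamma_{\mathcal{G}}(I))$. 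Applying \eqref{inequalities} once more, now to $\Gamma_{\mathcal{G}}(I)$, gives $\tau_f(\Gamma_{\mathcal{G}}(I))=\nu_f(\Gamma_{\mathcal{G}}(I))=(\sum_{i\in I}\lambda_i)^*$, which yields both the inequality and the final equality.

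One point in the lemma needs care: every hyperedge $R_p$ of $\Gamma_{\mathcal{G}_p}(I)$ must be covered. Theorem~\ref{reverse} supplies $R\in\mathcal{R}_i(\mathcal{G})$ with $R\subseteq R_p^{(1)}\triangle R_p^{(2)}$. For each $j\in R$, exactly one of $j$ and $j+n$ lies in $R_p$, and $\delta_p$ takes the value $\delta(j)$ at both. Hence $\sum_{v\in R_p}\delta_p(v)\ge\sum_{j\in R}\delta(j)\ge1$. I will just cite the lemma for this.

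For attainment, the natural route is to exhibit a fractional matching of $\Gamma_{\mathcal{G}_p}(I)$ of weight $2\nu_f(\Gamma_{\mathcal{G}}(I))$. Take an optimal fractional matching $\gamma$ of $\Gamma_{\mathcal{G}}(I)$. Each hyperedge $R\in\mathcal{R}_i(\mathcal{G})$ has a shifted copy $R+n$. When $|R|$ is even, Corollary~\ref{colR11} says $R+n=\widetilde R_R\in\mathcal{R}_i(\mathcal{G}_p)$, so we put weight $\gamma_R$ on both $R$ and $R+n$. These copies occupy disjoint halves $[n]$ and $[n+1,2n]$, so the vertex constraints are preserved and the total weight doubles. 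Odd $|R|$ is the obstacle, since $R+n$ does not recover $x_i$. The general upper bound need not be tight in that case, so I read "attained" as "attained for some code $\mathcal{C}$."

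To show the bound is sharp, I will exhibit an explicit case. The cleanest is a code in which every minimal recovery set for $x_i$ has even size. For instance, take $\mathcal{C}$ to be the repetition code with $k=1$, $n=2$ and $\mathcal{G}=(1\ 1)$, and $I=\{1\}$. Then $\mathcal{R}_1(\mathcal{G})=\{\{1\},\{2\}\}$ and $\lambda_1^*=2$. In $\mathcal{G}_p$, Theorem~\ref{thmR1}(ii) together with Theorem~\ref{reverse} gives $\{1\},\{2\},\{3,4\}$, and also $\{1,3,4\}$-type sets. Here the odd singletons again cause trouble, and $\hat\lambda_1^*$ comes out as $3$, not $4$.

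So a better choice is a $\mathcal{G}$ without a systematic column, where every minimal recovery set for $x_i$ has even cardinality. An example is $\mathcal{G}=\begin{pmatrix}1&1&0&0\\0&1&1&1\end{pmatrix}$ with $x_1$. I would then verify that $V\cup(V+n)$ has size equal to twice a matching of the form $\{R, R+n\}$. Matching these bounds gives equality, and finding such a clean witness is the main step I expect to require trial.
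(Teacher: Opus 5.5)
Your proof of the inequality chain is correct and follows the paper's route. The lemma's doubled fractional cover $\delta_p(j)=\delta_p(j+n)=\delta(j)$, justified by Theorem~\ref{reverse}, gives $\tau_f(\Gamma_{\mathcal{G}_p}(I))\le 2\tau_f(\Gamma_{\mathcal{G}}(I))$. Applying \eqref{inequalities} to both partial hypergraphs turns this into the stated chain. Reading ``attained'' as ``attained for some code'' also matches the paper, which supports that clause only through Example~\ref{exR2}.

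The gap is in that clause: you never produce a verified witness, and both computations you sketch are wrong.
\begin{itemize}
\item For $\mathcal{G}=(1\ 1)$ you have $\mathbf{g}^p_3+\mathbf{g}^p_4=\mathbf{0}$, so $\{3,4\}$ recovers nothing (Part~(ii) of Theorem~\ref{thmR1} concerns $x_{k+1}$). Hence $\mathcal{R}_1(\mathcal{G}_p)=\{\{1\},\{2\}\}$ and $\hat\lambda_1^*=2$, not $3$.
\item Your second matrix is not free of systematic columns, since $\mathbf{g}_1=\mathbf{e}_1$. So $\mathcal{R}_1(\mathcal{G})=\{\{1\},\{2,3\},\{2,4\}\}$ contains the odd set $\{1\}$, and $\lambda_1^*=2$. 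The pairs $R,R+n$ over the even sets, together with $\{1\}$, give only weight $3<4$, so the verification you describe would fail.
\item Both matrices also have repeated systematic columns ($\mathbf{g}_1=\mathbf{g}_2$, respectively $\mathbf{g}_3=\mathbf{g}_4=\mathbf{e}_2$). The paper's standing hypothesis excludes this.
\end{itemize}

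The missing idea is that an odd set in an optimal matching has no shifted copy in $\mathcal{R}_i(\mathcal{G}_p)$; the systematic singleton $\{i\}$ is the prime example. The lost unit of rate has to come from the new sets through $i+n$ created by the construction, chiefly the sets $\{j,j+n,i+n\}$ of Theorem~\ref{thmR4}. In your second example equality does hold, but only because $\{4,5,8\}$ is such a set: $\{1\},\{2,3\},\{6,7\},\{4,5,8\}$ is a matching of size $4=2\lambda_1^*$. This is exactly how Example~\ref{exR2} reaches $6=2\cdot 3$, via the matching $\{1\},\{2,4\},\{3,6\},\{8,10\},\{9,12\},\{5,7,11\}$. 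Alternatively, your even-cardinality idea works as stated with $\mathcal{G}_1=\begin{pmatrix}1&1\\0&1\end{pmatrix}$ and $I=\{2\}$. There $\mathcal{R}_2(\mathcal{G}_1)=\{\{1,2\}\}$ gives $\lambda_2^*=1$, while $\{1,2\}$ and $\{3,4\}$ are disjoint recovery sets for $x_2$ in $\mathcal{G}_p$, so $\hat\lambda_2^*=2$.
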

    
\begin{corollary}
    For $\boldsymbol{\lambda}\in\Lambda(\mathcal{G})$ and $\boldsymbol{\hat{\lambda}}\in\Lambda(\mathcal{G}_p)$, we have $\hat{\lambda}_i^*\leq2\lambda_i^* \text{ for all } i\in [k].$
\end{corollary}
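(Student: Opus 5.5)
The statement to prove is the Corollary: $\hat{\lambda}_i^*\le 2\lambda_i^*$ for every $i\in[k]$. I would obtain it as the special case $I=\{i\}$ of the preceding Theorem, which in turn rests on the fractional vertex cover lemma. Fix $i\in[k]$ and take $I=\{i\}$. Then $\Gamma_{\mathcal G}(I)=\Gamma^i_{\mathcal G}$ and $\Gamma_{\mathcal G_p}(I)=\Gamma^i_{\mathcal G_p}$. By \eqref{inequalities}, applied to partial hypergraphs, we have
\[
\lambda_i^*=\nu_f(\Gamma^i_{\mathcal G})=\tau_f(\Gamma^i_{\mathcal G}),
\qquad
\hat{\lambda}_i^*=\nu_f(\Gamma^i_{\mathcal G_p})=\tau_f(\Gamma^i_{\mathcal G_p}).
\]
So the claim reduces to $\tau_f(\Gamma^i_{\mathcal G_p})\le 2\tau_f(\Gamma^i_{\mathcal G})$.

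For this inequality I take an optimal fractional vertex cover $\delta$ of $\Gamma^i_{\mathcal G}$ and duplicate it as $\delta_p(j)=\delta_p(j+n)=\delta(j)$.

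\begin{itemize}
\item \emph{Covering condition.} Let $R_p\in\mathcal R_i(\mathcal G_p)$. By Theorem~\ref{reverse} and the discussion preceding it, there is some $R\in\mathcal R_i(\mathcal G)$ with $R\subseteq R_p^{(1)}\triangle R_p^{(2)}$. For each $j\in R$, exactly one of $j$ and $j+n$ lies in $R_p$. Hence $\sum_{v\in R_p}\delta_p(v)\ge\sum_{j\in R}\delta(j)\ge 1$.
\item \emph{Weight.} The total weight is $\sum_{v\in[2n]}\delta_p(v)=2\sum_{j\in[n]}\delta(j)=2\tau_f(\Gamma^i_{\mathcal G})$.
\end{itemize}

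Together these give $\hat{\lambda}_i^*=\tau_f(\Gamma^i_{\mathcal G_p})\le 2\tau_f(\Gamma^i_{\mathcal G})=2\lambda_i^*$.

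The only delicate point is the covering step. It needs the induced set $R$ to lie inside the symmetric difference, not merely inside the union. If some $j$ had both $j$ and $j+n$ in $R_p$, the weight of $j$ would be counted twice; that would still give an upper bound on the left side, so it does no harm in this direction. The real requirement is that every $j\in R$ contributes at least once. That holds because $R\subseteq R_p^{(1)}\triangle R_p^{(2)}\subseteq R_p^{(1)}\cup R_p^{(2)}$, and membership in $R_p^{(1)}\cup R_p^{(2)}$ is exactly the statement that $j$ or $j+n$ lies in $R_p$.

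The singleton case $R_p=\{i\}$, which occurs when there is a systematic column, is already covered. Here $R=\{i\}$, and $\delta(i)\ge1$ because $\{i\}$ is itself an edge of $\Gamma^i_{\mathcal G}$.
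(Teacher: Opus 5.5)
Your proposal is correct and matches the paper's route: the corollary is the case $I=\{i\}$ of the preceding theorem, which rests on duplicating an optimal fractional vertex cover via $\delta_p(j)=\delta_p(j+n)=\delta(j)$ and verifying the covering condition through the induced recovery set $R\subseteq R_p^{(1)}\triangle R_p^{(2)}$ from Theorem~\ref{reverse}. One clarification on your ``delicate point'' paragraph: the symmetric difference is needed only to guarantee that such an $R\in\mathcal{R}_i(\mathcal{G})$ exists, because columns indexed by $R_p^{(1)}\cap R_p^{(2)}$ cancel; the covering inequality itself uses only $R\subseteq R_p^{(1)}\cup R_p^{(2)}$ and $\delta\ge 0$, and double counting makes the left side larger, which is harmless (so ``upper bound'' should read ``lower bound'').
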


\begin{theorem}\label{ThmR11}
Let $\mathcal{C}$ be an $[n,k,d]$ binary linear code with a generator matrix
$\mathcal{G}$. Let $\mathcal{I}\subseteq[k+1]$, and define
$\beta_p:[2n]\to\{0,1\}$ by
\[
  \beta_p(j)=\begin{cases}
    |\operatorname{supp}(\mathbf{g}_j)| \bmod 2; & \text{if } j\in[n],\\[2pt]
    1-\bigl(|\operatorname{supp}(\mathbf{g}_{j-n})| \bmod 2\bigr); & \text{if } j\in[n+1,2n].
  \end{cases}
\]
Then $\beta_p$ is a vertex cover of $\Gamma_{\mathcal{G}_p}(\mathcal{I})$.
Consequently, $\sum_{i\in\mathcal{I}}\hat{\lambda}_i\leq n$, and if
$(k+1)\in\mathcal{I}$, this bound is achievable.
\end{theorem}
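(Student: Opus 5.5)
The plan is to use a single linear functional on $\mathbb{F}_2^{k+1}$ that takes the value $1$ on every standard basis vector. Let $\varphi:\mathbb{F}_2^{k+1}\to\mathbb{F}_2$ be $\varphi(\mathbf y)=\sum_{l=1}^{k+1}y_l \pmod 2$, so that $\varphi(\mathbf e_i)=1$ for every $i\in[k+1]$.

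First I evaluate $\varphi$ on the columns of $\mathcal{G}_p$. By the block form of $\mathcal{G}_p$ in Section~\ref{plotkinc}, column $j\in[n]$ is $(\mathbf g_j,0)^\top$ and column $j+n$ is $(\mathbf g_j,1)^\top$. Hence
\[
\varphi(\text{column } j)=|\operatorname{supp}(\mathbf g_j)| \bmod 2,\qquad \varphi(\text{column } j+n)=\bigl(|\operatorname{supp}(\mathbf g_j)|+1\bigr) \bmod 2 .
\]
This is exactly $\beta_p(j)$ and $\beta_p(j+n)$. So $\beta_p(v)=\varphi(\text{column } v)$ for every $v\in[2n]$.

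Next, take any hyperedge $R_p$ of $\Gamma_{\mathcal{G}_p}(\mathcal{I})$, i.e.\ $R_p\in\mathcal{R}_i(\mathcal{G}_p)$ for some $i\in\mathcal{I}$. Over $\mathbb{F}_2$ the membership $\mathbf e_i\in\operatorname{span}\{\text{columns in } R_p\}$ means some subset $S\subseteq R_p$ has column sum $\mathbf e_i$. By minimality $S=R_p$, but any subset $S$ suffices here. Applying $\varphi$ gives
\[
1=\varphi(\mathbf e_i)=\sum_{v\in S}\beta_p(v) \pmod 2 .
\]
So some $v\in S\subseteq R_p$ has $\beta_p(v)=1$. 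Therefore the set $\{v:\beta_p(v)=1\}$ meets every hyperedge, and $\beta_p$ is a vertex cover. This works uniformly for $i\in[k]$ and $i=k+1$, with no systematic hypothesis.

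Then I count the size of the cover. For each $j\in[n]$ exactly one of $\beta_p(j)$ and $\beta_p(j+n)$ equals $1$, so $\sum_{v}\beta_p(v)=n$. By \eqref{inequalities} applied to the partial hypergraph,
\[
\sum_{i\in\mathcal{I}}\hat\lambda_i\le \nu_f\bigl(\Gamma_{\mathcal{G}_p}(\mathcal{I})\bigr)=\tau_f\bigl(\Gamma_{\mathcal{G}_p}(\mathcal{I})\bigr)\le\tau\bigl(\Gamma_{\mathcal{G}_p}(\mathcal{I})\bigr)\le n .
\]

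For achievability when $k+1\in\mathcal{I}$, I assign weight $1$ to each of the $n$ disjoint hyperedges $\{j,j+n\}$ from Part~\eqref{thmr2} of Theorem~\ref{thmR1} and weight $0$ to all other hyperedges. This is a feasible allocation with $\hat\lambda_{k+1}=n$, consistent with \eqref{new}.

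The only real step is spotting the parity functional $\varphi$; once it is found, each remaining verification is one line.
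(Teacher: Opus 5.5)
Your proof is correct and is essentially the paper's argument. The paper likewise identifies $\beta_p(v)$ with the weight parity $|\operatorname{supp}(\mathbf g^p_v)| \bmod 2$ of the $v$-th column of $\mathcal G_p$, concludes that a hyperedge whose columns sum to $\mathbf e_i$ has odd $\beta_p$-sum, uses $\beta_p(j)+\beta_p(j+n)=1$ to get total weight $n$, and gets achievability from the $n$ disjoint pairs $\{j,j+n\}$; your linear functional $\varphi$ just makes explicit the additivity of weight parity over $\mathbb F_2$ that the paper uses implicitly.
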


\begin{proof}
Let  $\mathbf{g}_v^p$ denote the $v$-th column of $\mathcal{G}_p$, so that
$\mathbf{g}_j^p=(\mathbf{g}_j,0)^{\top}$ and
$\mathbf{g}_{j+n}^p=(\mathbf{g}_j,1)^{\top}$ for all $j\in[n]$. Then
$\operatorname{supp}(\mathbf{g}_j^p)=\operatorname{supp}(\mathbf{g}_j)$ and
$\operatorname{supp}(\mathbf{g}_{j+n}^p)=\operatorname{supp}(\mathbf{g}_j)\cup\{k+1\}$. It follows that 
\[
  \beta_p(v)=|\operatorname{supp}(\mathbf{g}^p_v)|\bmod 2
  \qquad\text{for every } v\in[2n].
\]

Let $i\in\mathcal{I}$ and let $R_p\in\mathcal{R}_i(\mathcal{G}_p)$. Then
\[
  \sum_{j\in R_p}\mathbf{g}^p_j=\mathbf{e}_i .
\]
Moreover,
\[
  \sum_{j\in R_p}|\operatorname{supp}(\mathbf{g}^p_j)|
  \;\equiv\;|\operatorname{supp}(\mathbf{e}_i)|\;=\;1 \pmod 2 .
\]
Hence
\[
  \sum_{j\in R_p}\beta_p(j)
  \;\equiv\;\sum_{j\in R_p}|\operatorname{supp}(\mathbf{g}^p_j)|\;\equiv\;1 \pmod 2.
\]
This implies that $\sum_{j\in R_p}\beta_p(j)$ is odd and in particular at least $1$. Thus every
hyperedge of $\Gamma_{\mathcal{G}_p}(\mathcal{I})$ is covered by $\beta_p$.

Moreover, for each $j\in[n]$ we have
$\beta_p(j)+\beta_p(j+n)=1$, and therefore
\[
  \sum_{v\in[2n]}\beta_p(v)=\sum_{j\in[n]}\bigl(\beta_p(j)+\beta_p(j+n)\bigr)=n .
\]
Consequently $\tau\bigl(\Gamma_{\mathcal{G}_p}(\mathcal{I})\bigr)\leq n$, and by
$\eqref{inequalities}$,
\[
  \sum_{i\in\mathcal{I}}\hat{\lambda}_i\leq n .
\]
Finally, if $(k+1)\in\mathcal{I}$, then by Part \eqref{thmr2} of Theorem \ref{thmR1} the $n$
pairwise disjoint recovery sets $\{j,j+n\}$, $j\in[n]$, give
$\hat{\lambda}_{k+1}=n$, so the bound is achievable.
\end{proof}

\begin{example}\label{exR2}
    Consider $\mathcal{C}$ as given in Example \ref{exR1}. 
For every $I\subseteq[k=3],$ we have $\nu(\Gamma_\mathcal{G}(I))=\tau_f(\Gamma_\mathcal{G}(I))=3.$ Consequently, the SRR of $\mathcal{C}$ is \[\Lambda(\mathcal G)=\{(\lambda_1,\lambda_2,\lambda_3)\in\mathbb{R}^3_{\geq 0}\mid \lambda_1+\lambda_2+\lambda_3\leq 3\}.\]
Now consider \(\mathcal{C}_p\). The maximal achievable service rate for $x_{4}$ is $\hat{\lambda}_4^*=6.$ For all $I\subseteq[3]$, we have $\nu(\Gamma_{\mathcal{G}_p}(I))=\tau_f(\Gamma_{\mathcal{G}_p}(I))=6.$ Consequently, $\hat{\lambda}_i^*=6$, and $\sum\limits_{i\in I}\hat{\lambda}_i\leq 6.$ Moreover, in this example, for any $I\subseteq[4]$, we have $\sum\limits_{i\in I}\hat{\lambda}_i\leq 6.$ Therefore,  \[
\Lambda(\mathcal G_p)=\Bigl\{(\hat\lambda_1,\dots,\hat\lambda_4)\in\mathbb R^4_{\ge0}\ \mid\
\hat\lambda_1+\hat\lambda_2+\hat\lambda_3+\hat\lambda_4\le 6\Bigr\}.
\]
\end{example}
 Similar results can be inductively proven for the iterated Plotkin-type codes. 
    
\begin{theorem}\label{Cpmvc}
     Let $\mathcal{C}$ be an $[n,k,d]$ binary linear code with a generator matrix $\mathcal{G}$, and $\mathcal{C}_{p^m}$ be the iterated Plotkin-type code defined as in \eqref{Cp^m}. Let $I\subseteq[k]$, and $\hat{\boldsymbol{\lambda}}\in\Lambda(\mathcal{G}_{p^m})$. Then \[
\sum_{i\in I}\hat{\lambda}_i\leq \tau_f\bigl(\Gamma_{\mathcal{G}_{p^m}}(I)\bigr)
\leq
2^{m}\tau_f\bigl(\Gamma_{\mathcal{G}}(I)\bigr).
\]
\end{theorem}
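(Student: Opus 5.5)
Argue by induction on $m$, with the unnamed lemma preceding the theorem (the fractional vertex cover lifting $\delta\mapsto\delta_p$) as the inductive step. The first inequality needs no induction: by \eqref{inequalities} applied to the partial hypergraph $\Gamma_{\mathcal{G}_{p^m}}(I)$, every $\hat{\boldsymbol{\lambda}}\in\Lambda(\mathcal{G}_{p^m})$ satisfies
\[
\sum_{i\in I}\hat\lambda_i\le\Bigl(\sum_{i\in I}\hat\lambda_i\Bigr)^*=\nu_f\bigl(\Gamma_{\mathcal{G}_{p^m}}(I)\bigr)=\tau_f\bigl(\Gamma_{\mathcal{G}_{p^m}}(I)\bigr).
\]
So the real content is the bound $\tau_f(\Gamma_{\mathcal{G}_{p^m}}(I))\le 2^m\tau_f(\Gamma_{\mathcal{G}}(I))$.

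For the induction, the case $m=0$ is trivial, and $m=1$ is exactly the lemma. Suppose the claim holds for $m-1$. The pair $(\mathcal{C}_{p^m},\mathcal{G}_{p^m})$ is by definition the Plotkin-type code of $(\mathcal{C}_{p^{m-1}},\mathcal{G}_{p^{m-1}})$, with length $2^{m-1}n$ doubled to $2^mn$. Apply the lemma with $\mathcal{G}_{p^{m-1}}$ in place of $\mathcal{G}$ and the same index set $I\subseteq[k]\subseteq[k+m-1]$. This gives
\[
\tau_f\bigl(\Gamma_{\mathcal{G}_{p^m}}(I)\bigr)\le 2\,\tau_f\bigl(\Gamma_{\mathcal{G}_{p^{m-1}}}(I)\bigr)\le 2\cdot 2^{m-1}\tau_f\bigl(\Gamma_{\mathcal{G}}(I)\bigr).
\]
Concretely, the optimal fractional cover $\delta$ of $\Gamma_{\mathcal{G}}(I)$ lifts to the cover $\delta_{p^m}(j+sn)=\delta(j)$ for $j\in[n]$ and $0\le s<2^m$, whose total weight is $2^m\sum_j\delta(j)$.

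The main obstacle is checking that the lemma, and the results it rests on (Theorem~\ref{reverse} and the preceding analysis showing that each $R_p$ contains an induced $R$ with $R\subseteq R_p^{(1)}\triangle R_p^{(2)}$), really apply to $\mathcal{G}_{p^{m-1}}$ in place of $\mathcal{G}$. Two standing hypotheses need verification.

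First, the code must be binary and linear. This holds, and the repetition row of $\mathcal{G}_{p^m}$ is $(\mathbf 0_{2^{m-1}n},\mathbf 1_{2^{m-1}n})$, so it matches the construction with $n$ replaced by $2^{m-1}n$.

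Second, $\mathcal{G}_{p^{m-1}}$ must have no zero column and no repeated systematic column for $x_i$ with $i\in I$. The columns of $\mathcal{G}_{p^{m-1}}$ are the columns $\mathbf g_j$ of $\mathcal{G}$ padded with a vector in $\mathbb F_2^{m-1}$. Padding a nonzero column keeps it nonzero. A column equal to $\mathbf e_i$ with $i\le k$ must have zero padding, and there is only one such copy of each $\mathbf g_j$. So both properties are inherited.

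It also helps to note that the lemma's covering argument never used minimality beyond the existence of an induced $R\in\mathcal{R}_i$ with $R\subseteq R_p^{(1)}\triangle R_p^{(2)}$. That fact follows purely from the linear-algebra identity $\sum_{j\in R_p}c_j=\sum_{j\in R_p^{(1)}\triangle R_p^{(2)}}c_j+|R_p^{(2)}|x_{k+1}$ together with the parity claim of Theorem~\ref{reverse}(i), and neither depends on the specific base code. If any doubt remains, I would prove directly that for every $R_p\in\mathcal{R}_i(\mathcal{G}_{p^m})$, the fold $\bigtriangleup_{s}\bigl(R_p\cap[sn+1,(s+1)n]\bigr)-sn$ contains a recovery set of $\mathcal{G}$ for $x_i$. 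The argument is to project the columns of $\mathcal{G}_{p^m}$ onto their first $k$ coordinates and use the fact that the padding coordinates must sum to zero.
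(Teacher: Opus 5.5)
Your proof is correct and follows the paper's intended route. The first inequality is \eqref{inequalities} applied to the partial hypergraph, and the second comes from iterating the fractional-cover lifting lemma $\delta\mapsto\delta_p$ along $\mathcal{G}\to\mathcal{G}_{p}\to\cdots\to\mathcal{G}_{p^m}$, which is the induction the paper has in mind when it says these results ``can be inductively proven'' (the extra hypothesis check and the direct fold argument are sound but not needed, since the lemma only uses that $R_p^{(1)}\triangle R_p^{(2)}$ is a recovery set for $x_i$ in the base code).
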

\begin{theorem}\label{CpmI}
     Let $\mathcal{C}$ be an $[n,k,d]$ binary linear code with a generator matrix $\mathcal{G}$, and let $\mathcal{C}_{p^m}$ be the iterated Plotkin-type code defined as in \eqref{Cp^m}. Let $\mathcal{I}\subseteq[k+1]$, and let $\hat{\boldsymbol{\lambda}}\in\Lambda(\mathcal{G}_{p^m})$. Then
 $\sum_{i\in \mathcal{I}}\hat{\lambda}_i\leq  n2^{m-1}$, and this bound is tight.
\end{theorem}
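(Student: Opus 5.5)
The plan is to reduce the statement to Theorem~\ref{ThmR11} for the upper bound, and to Corollary~\ref{colR12} for tightness. Both steps use the observation that, by \eqref{Cp^m}, $\mathcal{C}_{p^m}$ is exactly the Plotkin-type code of $\mathcal{C}' := \mathcal{C}_{p^{m-1}}$. The all-one vector appended at step $m$ has length equal to the length $2^{m-1}n$ of $\mathcal{C}_{p^{m-1}}$. So $\mathcal{G}_{p^m} = (\mathcal{G}_{p^{m-1}})_p$ in the sense of Section~\ref{plotkinc}, where $\mathcal{C}'$ is a $[2^{m-1}n,\,k+m-1]$ binary code. Its information indices $[k+m-1]$ contain $[k+1]$ as soon as $m\ge 2$. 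For $m=1$ the statement is literally Theorem~\ref{ThmR11}, so assume $m\ge 2$.

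\emph{Upper bound.} Apply Theorem~\ref{ThmR11} with $\mathcal{C}'$ in place of $\mathcal{C}$, so that $n$ is replaced by $2^{m-1}n$ and $k$ by $k+m-1$. That theorem holds for every index set contained in $[(k+m-1)+1] = [k+m]$. In particular it holds for $\mathcal{I}\subseteq[k+1]$. The parity vertex cover $\beta_p$ built from the column weights of $\mathcal{G}_{p^{m-1}}$ gives $\tau(\Gamma_{\mathcal{G}_{p^m}}(\mathcal{I}))\le 2^{m-1}n$. Then \eqref{inequalities} yields $\sum_{i\in\mathcal{I}}\hat\lambda_i\le n2^{m-1}$. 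The only thing to check here is that Theorem~\ref{ThmR11} used nothing about $\mathcal{G}$ beyond its being a binary generator matrix. This is true: its proof only uses the parity of $|\operatorname{supp}(\mathbf{g}_j)|$.

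\emph{Tightness.} I take $\mathcal{I}\ni k+1$, which is necessary anyway, and exhibit $n2^{m-1}$ pairwise disjoint recovery sets for $x_{k+1}$ in $\mathcal{G}_{p^m}$. By Part~\eqref{thmr2} of Theorem~\ref{thmR1}, $\mathcal{G}_p=\mathcal{G}_{p^1}$ has the $n$ disjoint even-size sets $\{j,j+n\}$ in $\mathcal{R}_{k+1}(\mathcal{G}_{p})$. Now induct on the number of further Plotkin steps. At each step the symbol $k+1$ is an old symbol of the current code, so Corollary~\ref{colR12} applies. It takes $t$ disjoint even-cardinality sets $R$ to the $2t$ disjoint sets $R$ and $\widetilde R_R=R+N$, where $N$ is the current length. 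These are still of even cardinality, namely $2$. So the induction continues, giving $n2^{m-1}$ disjoint size-$2$ sets in $\mathcal{R}_{k+1}(\mathcal{G}_{p^m})$. Minimality at each stage is given by Theorem~\ref{thmR1}(i) and $\mathcal{R}_i(\mathcal{G})\subseteq\mathcal{R}_i(\mathcal{G}_p)$; alternatively, note that a singleton never recovers $x_{k+1}$ since $\mathbf e_{k+1}$ is not a column. Assigning rate $1$ to each set gives $\hat\lambda_{k+1}=n2^{m-1}$ with all other $\hat\lambda_i=0$. This vector is feasible, and it meets the bound for any $\mathcal{I}$ containing $k+1$.

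The main obstacle is bookkeeping. One must confirm that the recursive definition really is a single Plotkin step applied to $\mathcal{C}_{p^{m-1}}$ with the correct-length all-one row, so that Theorem~\ref{ThmR11} and Corollary~\ref{colR12} transfer verbatim with $n$ replaced by the current length. I would state this identification explicitly as the first step.
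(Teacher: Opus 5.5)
Your proof is correct and is essentially the argument the paper intends; the paper gives no explicit proof and only remarks that the iterated results follow inductively from the single-step case. Applying the parity cover of Theorem~\ref{ThmR11} once to $\mathcal{C}_{p^{m-1}}$ gives the bound, and propagating the pairs $\{j,j+n\}$ through Corollary~\ref{colR12} gives $2^{m-1}n$ disjoint recovery sets for $x_{k+1}$, which is the right way to get tightness inside $\mathcal{I}\subseteq[k+1]$. One small quibble: including $k+1$ in $\mathcal{I}$ is not strictly necessary for tightness, since Example~\ref{exR2} attains the bound with $\mathcal{I}\subseteq[3]$; it is only needed for tightness to be guaranteed for every $\mathcal{C}$.
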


\subsection{Application to first-order binary Reed-Muller codes}
    The first-order binary Reed-Muller codes $R(1,m)$  with parameters $[2^m,m+1,2^{m-1}]$  have the following inductive definition with generator matrix $\mathcal{G}_m$ \cite{Ling}:\[R(1,m+1)=\{(\mathbf{c},\mathbf{c}) : \mathbf{c}\in R(1,m)\}\cup\{(\mathbf{c},\mathbf{c}+1) : \mathbf{c}\in R(1,m)\},\]
 which is exactly the Plotkin-type construction of Section \ref{plotkinc}. Consider an information vector $(x_1,x_2)$ and a generator matrix for $\mathcal{R}(1,1)$ 
\[
  \mathcal{G}_1=\begin{pmatrix}1&1\\0&1\end{pmatrix}.
\]
 The minimal recovery sets and the achievable service rates of the two information symbols are as follows:\[\mathcal{R}_1(\mathcal{G}_1)=\{(1)\},\mathcal{R}_2(\mathcal{G}_1)=\{(1,2)\},\text{ and } \lambda_1\leq1,\lambda_
2\leq1.\] 
Writing
$\mathcal{G}_m$ for the generator matrix obtained from $\mathcal{G}_1$ by $m-1$ successive applications of this construction, we have
$$\mathcal{G}_{m}=\begin{pmatrix}
    \mathcal{G}_{m-1}\ &\mathcal{G}_{m-1}\\\mathbf{0}_{2^{m-1}}&\mathbf{1}_{2^{m-1}}
\end{pmatrix}$$ and the results of Sections \ref{sec3} and \ref{sec4} apply at every stage. In the notation of
those sections, the base code at the $m$-th stage is $R(1,m-1)$, of length
$n_{m-1}=2^{m-1}$ and dimension $m$, and the newly adjoined information symbol in $R(1,m)$
is $x_{m+1}$.

Then, we have $R(1,2)$ with the generator matrix $\mathcal{G}_2$, information vector $(x_1,x_2,x_3)$, and the associated recovery sets as $\mathcal{R}_1=\{(1),(2,3,4)\},\mathcal{R}_2=\{(1,2),(3,4)\},\mathcal{R}_{3}=\{(1,3),(2,4)\}$.
Moreover, for each nonempty subset $I\subseteq[3],$ it follows that $\sum_{i\in I}\lambda_i\leq \nu(\Gamma_{\mathcal{G}_2}(I))=\tau_f(\Gamma_{\mathcal{G}_2}(I))=2$,  i.e.,\[\Lambda(\mathcal{G}_2)=\{(\lambda_1,\lambda_2,\lambda_3)\in\mathbb{R}^{3}_{\ge0}: \sum_{i=1}^3\lambda_i\leq2\}.\]

   \subsubsection*{The symbols $x_i$, $i\ge 2$, for $m\ge 3$}
   
   Fix $m\geq 3$. The symbol $x_{m+1}$ is the one adjoined at the last stage, so by
Part \eqref{thmr2} of Theorem \ref{thmR1} it possesses $n_{m-1}=2^{m-1}$ pairwise
disjoint recovery sets of cardinality $2$, and by \eqref{new},
\[
  \lambda_{m+1}^{*}=\nu\bigl(\Gamma^{m+1}_{\mathcal{G}_m}\bigr)=
  \tau\bigl(\Gamma^{m+1}_{\mathcal{G}_m}\bigr)=2^{m-1}.
\]
For $2\le i\le m$ the symbol $x_i$ was adjoined at the $i$-th stage, where it
likewise acquired $2^{i-2}$ pairwise disjoint recovery sets of cardinality $2$;
applying Corollary \ref{colR11}  once at each of the subsequent $m-i+1$ stages, we have that $\mathcal{R}_i(\mathcal{G}_m)$ contains $2^{m-1}$
pairwise disjoint recovery sets and $\nu(\Gamma^{i}_{\mathcal{G}_m})\ge 2^{m-1}$.
In the converse direction, Theorem \ref{Cpmvc} applied along the chain from
$\mathcal{G}_2$ gives
$\tau_f(\Gamma^{i}_{\mathcal{G}_m})\le 2^{m-2}\tau_f(\Gamma^{i}_{\mathcal{G}_2})
=2^{m-2}\cdot 2=2^{m-1}$. By \eqref{inequalities},
\[
  \lambda_i^{*}=2^{m-1}\qquad\text{for all } i\in[2,m+1].
\]

\subsubsection*{The symbol $x_1$}
The information symbol \(x_1\) is the only symbol carried over from \(\mathcal{G}_1\) and the only one with a systematic column in \(\mathcal{G}_m\). In \(\mathcal{G}_2\), it has a unique recovery set of cardinality $3$. By Theorem \ref{thmI1}, this recovery set propagates through the subsequent stages, so that after \(m-2\) further stages, there are 
\[
  N_m=\frac{2^{2m-1}-3\cdot 2^{m-1}+1}{3}=\frac{(2^{m}-1)(2^{m-1}-1)}{3}
\]
recovery sets of cardinality $3$ for $x_1$ in $\mathcal{R}_1(\mathcal{G}_m)$, none of which contains the systematic server, and each remaining server lies in exactly $2^{m-1}-1$ of them. Assigning the rate $1/(2^{m-1}-1)$ to each of them
loads every non-systematic server to exactly $1$, and together with unit rate
from the systematic server,
\[
  \lambda_1^{*}\;\ge\;1+\frac{N_m}{2^{m-1}-1}\;=\;1+\frac{2^{m}-1}{3}
  \;=\;\frac{2^{m}+2}{3}.
\]

 The upper bound given in \cite{Ly} for the maximal achievable service rate gives $$\lambda_1^*\leq1+\frac{2^m-1}{3}=\frac{2^{m}+2}{3}.$$ 
\subsubsection*{Cumulative service rate}
By Theorem \ref{CpmI},  the cumulative service rate of data symbols indexed by a subset $I\subseteq[m+1]$ satisfies
\begin{equation}\label{RMsum}  \sum_{i\in I}\lambda_i\;\le\;2^{m-1}.
\end{equation}

To summarize, the service rate region of first-order binary Reed-Muller codes R($1,m$), for $m\geq3$, is contained in the set of all nonnegative $\boldsymbol{\lambda}=(\lambda_1, \lambda_2,\ldots, \lambda_{m+1})$ satisfying-

\[\begin{cases}\lambda_i\geq 0 ~\forall~i\in[m+1],\\
    \lambda_1\leq\frac{2^{m}+2}{3},\\ 
    \lambda_i\leq2^{m-1} ~\forall~i\in[2,m+1],\\
    \sum\limits_{i\in I}\lambda_i\leq 2^{m-1} ~\text{ for }I\subseteq[m+1].
\end{cases}\]

\begin{lemma}\label{lemRMcoupling}
For every $m\ge 2$ and every
$\boldsymbol\lambda\in\Lambda(\mathcal{G}_m)$,
\[
  3\lambda_1+2\sum_{i=2}^{m+1}\lambda_i\;\le\;2^{m}+2 .
\]
\end{lemma}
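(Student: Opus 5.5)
The plan is to prove the inequality by weighted LP duality: I would exhibit a weighting $y:[2^m]\to\mathbb{R}_{\ge 0}$ of the servers with total weight $\sum_v y_v=2^m+2$ such that every hyperedge labeled $\mathbf e_1$ has $y$-weight at least $3$, and every hyperedge labeled $\mathbf e_i$, $i\in[2,m+1]$, has $y$-weight at least $2$. Given such a $y$, take any feasible allocation $\{\lambda_{i,\ell}\}$ for $\boldsymbol\lambda\in\Lambda(\mathcal{G}_m)$. Multiply constraint (2) at server $v$ by $y_v$ and sum over $v$. The left-hand side becomes $\sum_{i,\ell}\lambda_{i,\ell}\sum_{v\in R_{i,\ell}}y_v$, which is at least $3\lambda_1+2\sum_{i\ge 2}\lambda_i$ by (1) and the weight bounds on hyperedges. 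The right-hand side is $\sum_v y_v=2^m+2$, and the lemma follows.

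\textbf{Step 1 (column structure of $\mathcal{G}_m$).} I would first show by induction on $m$ that the first row of $\mathcal{G}_m$ is $\mathbf 1_{2^m}$, and that, after deleting this row, the columns of $\mathcal{G}_m$ are exactly the $2^m$ distinct vectors of $\mathbb F_2^m$. Column $1$ is the one whose remaining part is zero, i.e.\ column $1$ is $\mathbf e_1$. For the base case, the columns of $\mathcal{G}_1$ are $(1,0)^\top$ and $(1,1)^\top$. For the inductive step, the block form of $\mathcal{G}_m$ keeps the first row all ones. Columns $j$ and $j+2^{m-1}$ agree in the old coordinates and differ in the new last coordinate, so all $2^m$ truncated columns are distinct.

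\textbf{Step 2 (choice of weights and verification).} I would set $y_1=3$ and $y_v=1$ for $v\ne 1$, so that $\sum_v y_v=3+(2^m-1)=2^m+2$.

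Let $R$ be a minimal recovery set for $x_1$. If $1\in R$, its weight is at least $3$. Otherwise, comparing first coordinates in $\sum_{j\in R}\mathbf g_j=\mathbf e_1$ forces $|R|$ to be odd. The truncated columns then sum to $\mathbf 0$, and none of them is zero because $1\notin R$. Hence $|R|\ne 1$, so $|R|\ge 3$, giving weight at least $3$.

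For $i\ge 2$, $\sum_{j\in R}\mathbf g_j=\mathbf e_i$ forces $|R|$ to be even and nonzero, so $|R|\ge 2$. Its weight is therefore at least $2$, even in the case $1\in R$.

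\textbf{Expected obstacle.} The only delicate point is Step 1: identifying the columns of the recursively defined $\mathcal{G}_m$ as $(1,\mathbf b)^\top$ for distinct $\mathbf b\in\mathbb F_2^m$. That identification is exactly what makes the parity arguments in Step 2 work. Once it is in place, the weighting argument is a routine fractional-vertex-cover (dual) computation in the spirit of \eqref{inequalities}. It also matches the extreme points already found, namely $\lambda_1^*=(2^m+2)/3$ and $\lambda_i^*=2^{m-1}$, since each of these attains equality in the lemma.
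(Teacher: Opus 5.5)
Your proof is correct and is essentially the paper's argument. Your weighting $y_1=3$, $y_v=1$ is the same as the paper's step of summing all $2^m$ server loads and then using $2\lambda_{1,s}\le 2$ on the systematic server, and it rests on the same cardinality facts: the systematic singleton, every other $x_1$-set of size at least $3$, and every $x_i$-set with $i\ge 2$ of size at least $2$. You derive these facts more self-containedly, from the all-ones first row and the distinct truncated columns, where the paper appeals to Theorems~\ref{thmR2} and \ref{thmR8} and to $d(\mathcal{C}_p^{\perp})=4$. One slip occurs only in your closing remark: $2^{m-1}\mathbf e_i$ gives $2\cdot 2^{m-1}=2^m<2^m+2$, so it is not an equality case, and the tight vertices are $\tfrac{2^m+2}{3}\mathbf e_1$ and $2\mathbf e_1+(2^{m-1}-2)\mathbf e_i$.
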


\begin{proof}
By Theorem \ref{thmR1}, every $R_p\in\mathcal{R}_i(\mathcal{G}_m)$ has
$|R_p\cap[2^{m-1}+1,2^{m}]|$ even for $i\le m$ and odd for $i=m+1$; in
particular no recovery set for $x_i$ with $i\ge 2$ is a singleton, since
$\mathcal{G}_m$ has a systematic column only for $x_1$. Hence every minimal
recovery set for $x_i$, $i\ge 2$, has cardinality at least $2$. For $x_1$, the
unique recovery set of cardinality $1$ is the systematic one, and by Theorems
\ref{thmR2} and \ref{thmR8} all others have cardinality at least $3$, since
$d(\mathcal{C}_p^{\perp})=\min\{d(\mathcal{C}^{\perp}),4\}=4$ for every stage
from the second onward.

For a feasible allocation, let $\lambda_{1,s}\le 1$ denote the rate assigned to the
systematic node for $x_1$. Summing the loads of
the $2^{m}$ servers gives
\[
  \lambda_{1,s}+3(\lambda_1-\lambda_{1,s})+2\sum_{i=2}^{m+1}\lambda_i\;\le\;2^{m},
\]
and the claim follows.
\end{proof}

\begin{theorem}\label{thmRM}
For every $m\ge 2$,
\[
  \Lambda(\mathcal{G}_m)=\Bigl\{\boldsymbol\lambda\in\mathbb{R}^{m+1}_{\ge 0}:
  \sum_{i=1}^{m+1}\lambda_i\le 2^{m-1},\;\;
  3\lambda_1+2\sum_{i=2}^{m+1}\lambda_i\le 2^{m}+2\Bigr\}.
\]
\end{theorem}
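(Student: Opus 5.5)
The proof shows two inclusions. The inclusion of $\Lambda(\mathcal{G}_m)$ in the polytope $P_m$ defined by the two inequalities is immediate from earlier results. The sum constraint is \eqref{RMsum}, i.e., Theorem~\ref{CpmI} with $\mathcal I=[m+1]$ and base $R(1,1)$ of length $n=2$, giving $\sum_i\lambda_i\le 2\cdot 2^{m-2}=2^{m-1}$. For $m=2$ it also follows from the direct computation of $\Lambda(\mathcal G_2)$. The coupling constraint is Lemma~\ref{lemRMcoupling}. Hence $\Lambda(\mathcal{G}_m)\subseteq P_m$.

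For the reverse inclusion, both sets are convex: $\Lambda(\mathcal G_m)$ is the linear image of a fractional matching polytope, and it is down-closed. So it suffices to show that every vertex of $P_m$ is achievable. I first determine the vertices of $P_m$. Since the constraints are symmetric in $\lambda_2,\dots,\lambda_{m+1}$ and only their sum matters, the nonzero vertices are the following:
\begin{itemize}
\item $2^{m-1}\mathbf e_i$ for $i\ge 2$. The coupling constraint gives $2^m\le 2^m+2$, so the sum constraint is the binding one.
\item $\lambda_1^*\mathbf e_1$ with $\lambda_1^*=\frac{2^m+2}{3}$. For $m\ge 2$ this is at most $2^{m-1}$, so the coupling constraint binds here.
\item The intersection points $\lambda_1\mathbf e_1+\lambda_i\mathbf e_i$ where both constraints are tight. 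Solving gives $\lambda_1=2$ and $\lambda_i=2^{m-1}-2$.
\end{itemize}
(For $m=2$, $\lambda_1^*=2=2^{m-1}$, so the coupling inequality is redundant and the statement reduces to the known $\Lambda(\mathcal G_2)$.) Thus for $m\ge 3$ the plan is to exhibit explicit feasible allocations for three kinds of points:
\begin{itemize}
\item $2^{m-1}\mathbf e_i$ for $i\ge2$. This is already done in the subsection above: $2^{m-1}$ pairwise disjoint recovery sets come from Part~\eqref{thmr2} of Theorem~\ref{thmR1} and Corollary~\ref{colR11}.
\item $\frac{2^m+2}{3}\mathbf e_1$. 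This is already done via the $N_m$ triples of Theorem~\ref{thmI1} plus the systematic node.
\item The mixed point $2\mathbf e_1+(2^{m-1}-2)\mathbf e_i$. This is the new part.
\end{itemize}

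For the mixed point, my plan is to use a sub-Plotkin structure. Note that $2\mathbf e_1+(2^{m-1}-2)\mathbf e_i$ uses every server fully: the load is $1\cdot 1+3\cdot 1+2(2^{m-1}-2)=2^m$ when $x_1$ is served by the systematic node plus one unit on triples. So $x_1$ must be served at rate $1$ from server $1$, and at total rate $1$ from weight-$3$ sets avoiding server $1$, such that the servers touched by these triples can be completed by a perfect fractional cover with pairs recovering $x_i$. Concretely, in $R(1,m)$ the columns are the points of $\mathbb F_2^{m}$ (affine coordinates), $x_1$ is the constant, and $x_i$ is a linear coordinate. A recovery triple for $x_1$ is three points $a,b,c$ whose sum is the origin $0$ (the systematic point). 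Pair recovery sets for $x_i$ are pairs $\{u,u+e\}$ with $e$ having $i$-th coordinate $1$. I would choose one triple $\{a,b,c\}$ with $a+b+c=0$ using rate $1$. The remaining $2^m-4$ points, namely all except $0,a,b,c$, must be perfectly matched by such pairs. Pick $a$, $b$ such that $a+b=c$ lies in the hyperplane $\{u_i=0\}$, and take $e=a$ with $a_i=1$ (then $b_i=1$). Pairing by translation by $e$, $0\leftrightarrow a$ and $b\leftrightarrow c$ are pairs, so removing $\{0,a,b,c\}$ removes exactly two translation pairs. The remaining $2^{m-1}-2$ pairs form a perfect matching, and each pair recovers $x_i$ because its difference $e$ has $e_i=1$. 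I need to verify in the paper's index convention, via the recursive $\mathcal G_m$, that $x_i$ for $i\ge2$ corresponds to a linear coordinate functional and that pair sums give exactly $\mathbf e_i$.

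The main obstacle I expect is this translation into the recursive generator-matrix language. One could instead do an induction via Theorem~\ref{thmR1}(\ref{thmr1}) and Theorem~\ref{thmR4}, propagating the allocation from $\mathcal G_{m-1}$ and doubling pairs by Corollary~\ref{colR11}; I would fall back on this if the coordinate description is awkward. Finally, I conclude by convexity and down-closedness: every point of $P_m$ is dominated by a convex combination of achievable vertices, which yields equality.
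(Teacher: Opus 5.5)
Your two inclusions, the vertex list, and your handling of the vertices $2^{m-1}\mathbf e_i$ and $\frac{2^m+2}{3}\mathbf e_1$ are the same as in the paper's proof. The only place you diverge is the mixed vertex $2\mathbf e_1+(2^{m-1}-2)\mathbf e_i$, and your justification there contains an error.

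Write the columns of $\mathcal G_m$ as $(1,u)^\top$ with $u\in\mathbb F_2^m$. A pair $\{u,v\}$ then has column sum $(0,u+v)^\top$, so it recovers $x_i$ only if $u+v$ is exactly the unit vector $\mathbf w\in\mathbb F_2^m$ whose $1$ sits in row $i$. Having a $1$ in that position is not enough. Suppose you take $e=a$ with a $1$ in row $i$ but $a\neq\mathbf w$, for instance $a=\mathbf w+\mathbf w'$ with $\mathbf w'$ the unit vector in row $j$. Then the translation pairs $\{u,u+a\}$ recover $x_i+x_j$, and your allocation is infeasible for the mixed point. So the sentence ``each pair recovers $x_i$ because its difference $e$ has $e_i=1$'' is false, and the check you postponed is exactly where the argument breaks.

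The repair is to force $a=\mathbf w$. Choose any $b\neq a$ with a $1$ in row $i$, and set $c=a+b$. Then $c\neq 0$ and $c$ has a $0$ in row $i$. Translation by $\mathbf w$ pairs $0$ with $a$ and $b$ with $c$, so removing $\{0,a,b,c\}$ leaves $2^{m-1}-2$ disjoint pairs, each recovering $x_i$.

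In the paper's notation, $a$ is the partner $\varepsilon$ of the systematic server in $S_1=\{1,\varepsilon\}$, $\{b,c\}$ is one of the pairs $S_t$, and your triple is $S_t\cup\{\varepsilon\}$. With this fix your allocation is valid: rate $1$ on $\{1\}$, rate $1$ on that single triple, and rate $1$ on each of the remaining $2^{m-1}-2$ pairs. It saturates every server and is a genuine integral matching.

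The paper's proof instead spreads the second unit of $x_1$ uniformly over all $2^{m-1}-1$ triples $S_t\cup\{\varepsilon\}$, each at rate $\frac{1}{2^{m-1}-1}$. It then fills each pair $S_t$ ($t\ge 2$) at the fractional rate $\frac{2^{m-1}-2}{2^{m-1}-1}$. Both allocations give $\lambda_1=2$ and $\lambda_i=2^{m-1}-2$. Your corrected version is simpler and shows the vertex is attained by an integral matching; the paper's spreading is not needed.
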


\begin{proof}
Write $\mathcal{P}_m$ for the set on the right. The inclusion
$\Lambda(\mathcal{G}_m)\subseteq\mathcal{P}_m$ is \eqref{RMsum} together with
Lemma \ref{lemRMcoupling}.

For the reverse inclusion, observe that $\mathcal{P}_m$ is defined by two non-trivial inequalities in $\mathbb{R}^{m+1}_{\ge0}$, so each of its
vertices has at most two nonzero coordinates. Additionally, these coordinates can not both have indices at least $2$, as equality in both non-trivial inequalities would give
$2\cdot 2^{m-1}=2^{m}+2$. The vertices are therefore
\[
  \mathbf{0}_{m+1},\qquad \tfrac{2^{m}+2}{3}\mathbf{e}_1,\qquad
  2^{m-1}\mathbf{e}_i,\qquad 2\mathbf{e}_1+(2^{m-1}-2)\mathbf{e}_i
  \qquad (\text{ where }i\in[2,m+1]).
\]
Convexity of \(\Lambda(\mathcal{G}_m)\) reduces the proof to realizing these vertices. The first three have already been obtained. 

For the fourth case, fix $i\ge 2$. By Part \eqref{thmr2} of Theorem \ref{thmR2} and Corollary \ref{colR11}, there are $2^{m-1}$ pairwise disjoint recovery sets of cardinality $2$ for $x_i$, say $S_1, S_2,\ldots, S_{2^{m-1}}$. Let $S_1=\{1,\varepsilon\}$ be the recovery set for $x_i$ that contains the systematic server for $x_1$. Each of the other $S_t$ ($t\neq1$) together with $\varepsilon$, recovers $x_1$. Moreover, $(S_a\cup\{\varepsilon\})\cap (S_b\cup\{\varepsilon\})=\{\varepsilon\}$, where $2\leq a\neq b\leq 2^{m-1}$.


Assign load $1$ to set $\{1\}$ and load $\frac{1}{2^{m-1}-1}$ to the sets $S_t\cup\{\varepsilon\}$ for all $2\leq t\leq 2^{m-1}$. Then $\lambda_1=2$,
the servers first and $\varepsilon$ carry load exactly $1$. Assigning the load $\frac{2^{m-1}-2}{2^{m-1}-1} $ to the pair $S_t$ for
$t\in[2,2^{m-1}]$ and zero to $S_1$, exhausts every server's capacity and gives
\[
  \lambda_i=\frac{2^{m-1}-2}{2^{m-1}-1}(2^{m-1}-1)=2^{m-1}-2 .
\]
\end{proof}

 \section{Conclusion}\label{secC}
 This work analyzes how standard code construction operators act on the SRR, given that the SRR for the parent code is known. We establish a correspondence between the recovery set structure of a Plotkin-type code $\mathcal {C}_{p} $ and the underlying parent code $\mathcal{C}$, and consequently a correspondence between their achievable service rates. As a consequence of the obtained results, we derive a bound for service rates of iterated Plotkin constructions and subsequently obtain the service rates for the binary first-order Reed--Muller code $R(1,m)$, for any $m$, using a generator matrix for $R(1,2)$. As future work, it would be interesting to examine the relationship of the general 
$(\mathbf u\mid\mathbf u+\mathbf v)$ construction with an arbitrary second code, and to extend the analysis to \(q\)-ary codes and other code constructions.
 \section*{Acknowledgments}
 The first author is grateful to the University Grants Commission (UGC), Government of India, for the financial support. 

{}

\end{document}